\newtheorem{example}{Example}
\newtheorem{defn}{Definition}
\newtheorem{thm}{Theorem}
\newtheorem{cor}[thm]{Corollary}
\newtheorem{note}{Remark}
\newtheorem{conj}{Conjecture}
\newtheorem{const}{Construction}
\newtheorem{claim}{Claim}
\newcommand{\bit}{\begin{itemize}}
\newcommand{\eit}{\end{itemize}}
\newcommand{\bcor}{\begin{cor}}
\newcommand{\ecor}{\end{cor}}
\newcommand{\beq}{\begin{equation}}
\newcommand{\eeq}{\end{equation}}
\newcommand{\beqn}{\begin{equation*}}
\newcommand{\eeqn}{\end{equation*}}
\newcommand{\bea}{\begin{eqnarray}}
\newcommand{\eea}{\end{eqnarray}}
\newcommand{\bean}{\begin{eqnarray*}}
\newcommand{\eean}{\end{eqnarray*}}
\newcommand{\ben}{\begin{enumerate}}
\newcommand{\een}{\end{enumerate}}
\newcommand{\bdefn}{\begin{defn}}
\renewcommand\footnotemark{}
\newcommand{\codec}{\ensuremath{{\cal C}}}
\newcommand\scalemath[2]{\scalebox{#1}{\mbox{\ensuremath{\displaystyle #2}}}}
\begin{document}
\sloppy
\title{Binary Codes with Locality for Multiple Erasures Having Short Block Length}

\author{
  \IEEEauthorblockN{S. B. Balaji, K. P. Prasanth and P. Vijay Kumar, \it{Fellow}, \it{IEEE}}
  
  \IEEEauthorblockA{Department of Electrical Communication Engineering, Indian Institute of Science, Bangalore.  \\ Email: balaji.profess@gmail.com, prasanthkp231@gmail.com,pvk1729@gmail.com} 
\thanks{P. Vijay Kumar is also an Adjunct Research Professor at the University of Southern California.  This work is supported in part by the National Science Foundation under Grant No. 1421848 and in part by the joint UGC-ISF research program.}
}
\maketitle

\begin{abstract}
	This paper considers linear, binary codes having locality parameter $r$, that are capable of recovering from $t\geq 2$ erasures and which additionally, possess short block length.  Both parallel (through orthogonal parity checks) and sequential recovery are considered here. In the case of parallel repair, minimum-block-length constructions are characterized whenever $t |(r^2+r)$ and examples examined.   In the case of sequential repair, the results include (a) extending and characterizing minimum-block-length constructions for $t=2$, (b) providing improved bounds on block length for $t=3$ as well as a general construction for $t=3$ having short block length, (c) providing high-rate constructions for $\left( r=2, \ t \in \{4,5,6,7\} \right) $ and (d) providing short-block-length constructions for general $(r,t)$.  Most of the codes constructed here are binary codes.
\end{abstract}

%\tableofcontents

\begin{IEEEkeywords} Distributed storage, codes with locality, sequential repair, codes with availability, orthogonal-parity codes.
\end{IEEEkeywords}

\section{Introduction}
All codes discussed are linear and over a finite field $\mathbb{F}_q$. Throughout, $[n,k,d_{\min}]$ will denote the block length, dimension and minimum distance of the linear code. Two classes of codes with locality are considered here, namely those that offer parallel and sequential recovery respectively. While much of the discussion holds for general $q$, most of the codes constructed in the paper are binary, corresponding to $q=2$.   The focus in the paper is on the construction of codes having short or minimum block length.   This is motivated in part by practical implementation considerations, and in part, with a view to getting some insight into what is to be gained by increasing the block length. 
 
\subsection{Parallel and Sequential Recovery} \label{sec:two-classes} 

\paragraph{Codes with parallel recovery} Codes in this class can be defined as the nullspace of an $(m \times n)$ parity-check matrix $H$, where each row has weight $(r+1)$ and each column has weight $t$, with $nt=m(r+1)$.  Additionally, if the support sets of the rows in $H$ having a non-zero entry in the $i$th column are given respectively by $S^{(i)}_j, j=1,2,\cdots t$, then we must have that 
\bean
S^{(i)}_j \cap S^{(i)}_l & = &  \{i \} , \forall 1 \leq j \neq l \leq t.
\eean
Thus each code symbol $c_i$ is protected by a collection of $t$ orthogonal parity checks (opc) each of weight $(r+1)$. The parameter $r$ is called the locality parameter and we will formally refer to this class of codes as $(r,t)_{\text{par}}$ codes.  When the parameters $r,t$ are known from the context, we will simply term the code as a code with parallel recovery.  This is because recovery from a set of $t$ erasures can be carried out locally and in parallel. 
\paragraph{Codes with sequential recovery} The requirement of this class of codes is that given any set of $s \leq t$ erased symbols, $\{x_1,...,x_s\}$, there is an arrangement of these $s$ symbols (say) $\{x_{i_1},...,x_{i_s}\}$ such that there are $s$ codewords $\{h_1,...,h_s\}$ in the dual of the code, each of weight $\leq r+1$, with $i_j \in \text{support}(h_j)$ and $\text{support}(h_j) \cap \{i_{j+1},...,i_s\} = \emptyset$, $\forall 1 \leq j \leq s$.  The parameter $r$ is again the locality parameter and we will formally refer to this class of codes as $(r,t)_{\text{seq}}$ codes as it can recover from any $s\leq t$ erasures sequentially and locally, using a set of $s \leq t$ parity checks, each of weight $\leq r+1$, as mentioned in the definition above.  Again when $(r,t)$ are clear from the context, we will refer to a code in this class as a code with sequential recovery.  

%The definition of parallel recovery above preclude the use of additional global parities that  could help increase the minimum distance of these codes.  The aim here is simply to see how well these codes can perform when only local constraints are permitted.
%\vspace{-5pt} 

\subsection{Background} 

The notion of codes with locality was introduced in \cite{GopHuaSimYek}, see also \cite{PapDim,OggDat}.  The initial focus was on recovery from a single erasure and constructions for codes that accomplish this can be found in \cite{GopHuaSimYek,HuaChenLi,KamPraLalKum,TamBar_Optimal_LRC}.   There are several approaches in the literature to local recovery from multiple erasures.   The approach adopted by the authors of \cite{KamPraLalKum,SonDauYueLi}, is to use a stronger local code with $d_{\min}>2$ to protect the code symbols against multiple erasures.  There is a second class of codes, termed as t-availability codes in which each code symbol is covered by $t$ orthogonal parity checks, but these are only required to have support of size $\leq (r+1)$ as opposed to the strict requirement of $=(r+1)$ discussed here.  Codes with $t$-availability can be found discussed in \cite{ZhaWanGe,HuaYaaUchSie,TamBarFro,KimNamSong,SheFuGua,WanZha_Combinatorial_Repair_locality,TamBar_Optimal_LRC,JuaHolOgg,RawPapDimVis_arxiv,WangZhang_multiple_erasure}.  The sequential approach of recovering from multiple erasures, introduced in \cite{PraLalKum}, can also be found discussed in \cite{SonYue_3_Erasure,RawMazVis,SonYue_Binary_local_repair}. 

\subsection{Our Contributions} 

The focus here is on codes with parallel or sequential recovery, that have short block length.  

\paragraph{Parallel Recovery}   In the case of parallel recovery, we derive a simple lower bound on the block length of this class of codes for given $(r,t)$.  It is then shown that a necessary and sufficient condition for the existence of a minimum-block length ($n_{min}$), parallel-recovery code is the existence of a balanced incomplete block design (BIBD) with parameters governed by $\{n_{min},r,t\}$.  Some examples are noted, including cases in which the codes possess in addition, the highest possible rate for the given $\{n_{min},r,t\}$. 
%the maximum possible rate over  $\mathbb{F}_2$ given parameter pair $(r,t)$ and the minimum block length.  

\paragraph{Sequential Recovery}

Our results here can be broken down according to the value of the parameter $t$:
\begin{itemize}
	\item $t=2$: here we generalize the optimal construction by Prakash et al \cite{PraLalKum} to cover a larger parameter set. Each code constructed here has minimum block length for a given $k,r$ and for the case when $r | 2k$, $k$ being the dimension of the code, a characterization of the class of optimal codes with maximum rate which can sequentially recover from $2$ erasures is provided. 
	\item $t=3$: we derive a lower bound on block length for a given $k,r$ that for $k \leq r^{1.8}$ improves upon an earlier bound by \cite{SonYue_3_Erasure} for binary codes. A general construction of codes is presented with rate $\frac{r}{r+3}$ and short block length ($O(r^{1.5})$) that differs by at most $2$ from the lower bound on block length derived here, over a large parameter range. 
	\item $r=2$ and $t \in \{4,5,6,7\}$: several construction of codes having short block length and high rate are provided. 
	\item General $t$: some general constructions having short block length are presented here.
	\eit 
	Most codes constructed here are binary codes. The results on parallel recovery are presented in Section~\ref{sec:parallel}.  Results on sequential recovery appear in Sections~\ref{sec:2_erasures} to \ref{sec:general_t}.
	
\section{Minimum Block-Length Codes with Parallel Recovery}  \label{sec:parallel} 

%Let $\mathbb{F}_q$ denote the finite field of $q$ elements, $q=p^e$ for $p$ prime, $e \geq 1$.  often we will have $q=2$ in this paper.  Let ${\cal C}$ be a linear, code of block length $n$ over $\mathbb{F}_q$ such that each code symbol $c_{\ell}, 1 \leq \ell \leq n$ is covered by $t$ orthogonal parity checks where each parity check has Hamming weight $(r+1)$.    By this we mean the following.  Let there be $m$ parity checks in all.  Thus $m$ is given by 
%\bea
%m(r+1) & = & nt .
%\label{eq:edge_count} 
%\eea
%Let $S_i, 1 \leq i \leq m$ denote the support of the $i$th parity check, thus $\mid S_i\mid=(r+1)$.  Let $\{S_{a_j}\}_{j=1}^t$ denote the $t$ support sets such that $i  \in S_{a_j}, 1 \leq j \leq t$.  Then we have that 
%\bean
%S_{a_j}   \cap S_{a_l} & = & \{i\}, \ \ 1 \leq j \neq l \leq t.
%\eean
%We will refer to such a \codec\ as an $(r,t)_{\text{par}}$ code. 

Let $H$ be the corresponding $(m \times n)$ parity check matrix of an $(r,t)_{\text{par}}$ parallel-recovery code \codec\ over the field $\mathbb{F}_q$ which includes all the orthogonal parities of all symbols.  Then each column of $H$ has Hamming weight $t$ and each row has weight $(r+1)$. The code ${\cal C}$ may also be viewed as $(d_v=t, d_c=(r+1))$-regular LDPC code.  The corresponding bipartite graph of the code must necessarily have no cycles of length $4$.  

Let $A$ be the $\{0,1\}$ matrix over the reals $\Re$ given by
\bean
a_{ij} & = & \left\{ \begin{array}{rl} 1 & h_{ij} \neq 0, \\   0 & \text{  else  } .  \end{array} \right.
\eean
Given the orthogonal nature of the parity checks and our assumption on the parities involving a code symbol $c_i$, the sum of the inner products between distinct rows of $A$ must satisfy: 
\bean
{m \choose 2} \ \geq \ \sum_{j>i} \left( \sum_{l=1}^na_{i,l}a_{j,l} \right) 
	=  \sum_{l=1}^n \left(\sum_{j>i} a_{il}a_{jl} \right)  =  n {t \choose 2} .
\eean
Using the relation $nt=m(r+1)$, we obtain 
\bea
	m  \geq  (t-1)(r+1) + 1, \ \ \ 
	n \geq  (r+1)^2 - \frac{(r+1)r}{t}.   \label{eq:lb_n} 
\eea
Our interest is in the minimum-block-length case, where \eqref{eq:lb_n} holds with equality and for which a necessary condition is that $t \mid r(r+1)$.   We set $n_{\min}=(r+1)^2 - \frac{(r+1)r}{t}$ and define a code having length $n_{\min}$ to be a minimum-length code. 

When equality holds, it follows that the inner product of every pair of distinct rows of $A$ is exactly equal to $1$.  Let us define the column support sets $B_j \subseteq [m]$ by 
\bean
i \in B_j & \text{ iff } &  a_{i,j}=1 \text{ or equivalently, $h_{ij} \neq 0$}.
\eean
It follows then that the sets $\{ B_j \}_{j=1}^n$ form a $(b,v,\hat{r},\hat{k}, \lambda)$ balanced incomplete block design (BIBD) having parameters 
\bean
b \ = \ n, v \ = \ m, \ \hat{r} = (r+1), \ \hat{k}=t, \ \ \lambda=1 .
\eean
Conversely a BIBD with these parameter values will yield an $(r,t)_{\text{par}}$ block code \codec\ having minimum possible block length.   The rate $R$ of \codec\ clearly satisfies $R \geq 1 -\frac{t}{(r+1)}$.  
%\bean
%R & \geq & 1 -\frac{t}{(r+1)}.
%\eean
% 
%
\begin{example}\label{eg:ProjPlane}
	Let $Q=2^s, s \geq 2$ and let $PG(2,Q)$ denote the projective plane over $\mathbb{F}_Q$.  There are $Q^2+Q+1$ points and $Q^2+Q+1$ lines in $PG(2,Q)$.  Each line contains $Q+1$ points and there are $Q+1$ lines through a point.    Set $n=Q^2+Q+1$. Let $H$ be the $(n \times n)$ parity check matrix of a binary code \codec\, i.e., a code over $\mathbb{F}_2$, given by 
	\bean
	h_{ij}= 1 & \Leftrightarrow & \text{ the $i$th point lies on the $j$th line}.
	\eean
	Then it is known that $H$ has rank $3^s+1$ over $\mathbb{F}_2$, and that \codec\ has $d_{\min}=Q+2$, thus \codec\ is a binary $(Q,Q+1)_{\text{par}}$ code having parameters $[Q^2+Q+1,Q^2+Q-3^s,Q+2]$.    A plot of the rate of this code versus the bound by Tamo et al in \cite{TamBarFro} as a function of the parameter $s$ is shown in Fig.~\ref{fig:bibd_TB_bound}.  
\end{example}

While this code is well-known in the literature on LDPC codes, our aim is to draw attention to the fact that this code is a $(Q,Q+1)_{\text{par}}$ code having minimum block length. The parameters of a minimum-block-length code obtained by a similar construction involving lines in the affine plane and $(r,t)=(Q,Q)$ are given by $[n,k,d_{\min}]=[Q^2+Q,Q^2+Q-3^s,\geq Q+1]$, where $Q=2^s, s \geq 2$. 

\begin{conj}[Hamada-Sachar Conjecture (Conjecture 6.9.1 of \cite{AssKey}] 
	Every Projective plane of order $p^s$, p a prime, has p rank at least ${p+1 \choose 2}^s + 1$ with equality if and only if its desarguesian.
\end{conj}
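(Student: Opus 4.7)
The plan is to work with the $(v \times v)$ incidence matrix $N$ of a projective plane $\Pi$ of order $q = p^s$ (with $v = q^2 + q + 1$) and control its rank over $\mathbb{F}_p$. The starting identity $NN^T = N^T N = qI + J$ over $\mathbb{Z}$ reduces modulo $p$ to $NN^T \equiv J \pmod{p}$, which gives an immediate but weak lower bound on $\text{rank}_p(N)$. To leverage this, I would view the row space of $N$ over $\mathbb{F}_p$ as the $\mathbb{F}_p$-code of $\Pi$, and analyze its structure via the natural submodules determined by the incidence geometry (sums of rows indexed by lines through a point, by points on a line, etc.).

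For the ``if'' direction, I would reproduce Hamada's original computation for $PG(2,p^s)$: realize the row-code as the image of a polynomial evaluation map on homogeneous coordinates, stratify polynomials of degrees $0,1,\ldots,s(p-1)$, and count the kernel via a tensor decomposition reflecting the factorization $p^s = p \cdot p \cdots p$. Each ``slot'' contributes a factor whose dimension works out to ${p+1 \choose 2}$, giving exactly ${p+1 \choose 2}^s + 1$. This part is classical and unobstructed.

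For the ``only if'' direction, which is the crux of the conjecture, my plan would be inductive on $s$. The base case $s=1$ would require showing that among all projective planes of prime order $p$, the Desarguesian one $PG(2,p)$ is the unique minimizer of $p$-rank. One would try to produce, from the hypothesis of minimal $p$-rank, a rich family of short codewords (weight-$(p+1)$ rows and their small combinations) that force the existence of many subplanes of order $p$, and from the abundance of subplanes deduce a coordinatization over $\mathbb{F}_p$. For the inductive step $s \mapsto s+1$, one would hope to exhibit a recursive ``subplane-cover'' construction internal to $\Pi$ whose existence is forced by equality, then apply the inductive hypothesis to each subplane.

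The main obstacle, and the reason the Hamada–Sachar conjecture remains a celebrated open problem in finite geometry, is precisely this rigidity step: converting a numerical extremality hypothesis on $\text{rank}_p(N)$ into the strong algebraic statement that the coordinatizing ternary ring of $\Pi$ is a field. No general technique is known for recovering field-like structure from rank data alone, and the conjecture has been verified only in sporadic small cases (such as the Fano plane $p=2,s=1$, where uniqueness of the plane itself settles the matter) and under additional symmetry hypotheses. I therefore do not expect a complete proof via the outline above; a genuinely new idea linking $\mathbb{F}_p$-rank of incidence matrices to the classification of finite planes appears to be needed.
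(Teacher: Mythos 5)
The paper does not prove this statement: it is presented as a conjecture (attributed to Hamada--Sachar, Conjecture 6.9.1 of Assmus--Key), and the authors explicitly write that it ``is as yet unproven,'' invoking it only conditionally to argue that their projective-plane code in Example~1 would be rank-optimal among binary $(Q,Q+1)_{\text{par}}$ codes of that length if the conjecture holds. So there is no proof in the paper against which to compare your attempt, and your refusal to claim a complete proof is the correct outcome here.

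Your account of the state of the art is accurate. The computable half --- that the Desarguesian plane $PG(2,p^s)$ attains $p$-rank exactly ${p+1 \choose 2}^s+1$ --- is classical and consistent with the paper's own use of the value $3^s+1$ for $p=2$; your sketch via polynomial evaluation and the tensor/stratification argument is the standard route. The open content is exactly where you place it: the universal lower bound over \emph{all} planes of order $p^s$ and the rigidity statement that equality forces Desarguesianness. Your proposed induction on $s$ already stalls at the base case $s=1$, since uniqueness of the plane of prime order $p$ is itself a famous open problem, and no known technique converts extremality of $\mathrm{rank}_p(N)$ into a coordinatization over $\mathbb{F}_p$. One small caution: the identity $NN^T \equiv J \pmod p$ gives only a trivial rank bound and is not known to bootstrap into the conjectured one, so it should not be presented as a viable first step toward the inequality. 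Nothing further is required of you here; the paper treats the statement as an assumption, not a theorem.
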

The above conjecture is as yet unproven, but if true, would show that the projective-plane code described in Example \ref{eg:ProjPlane}  would have minimum possible block length and the maximum possible rate over the binary field $\mathbb{F}_2$ among all binary $(Q, Q+1)_{par}$ codes with $n=Q^2+Q+1$ and  $Q=2^s$. 

\begin{example}\label{eg:sts}
	For $t=3$, one can obtain a code by making use of the Steiner Triple System (STS) associated to the point-line incidence matrix of $(s-1)$ dimensional projective space $PG(s-1,2)$ over $\mathbb{F}_2$.   Once again, the rows of $H$ correspond to points in the projective space and the columns to lines.  It follows that $t=3$ and $(r+1)=\frac{2^s-2}{2^2-2} = 2^{s-1}-1$. Let $m=2^s-1$.   It turns out that this yields a binary $(2^{s-1}-2,3)_{\text{par}}$ code having parameters $ [\frac{m(m-1)}{6},\frac{m(m-1)}{6}-m+s,4]$.  A plot comparing the rate of this code and the bound by Tamo et al \cite{TamBarFro} is shown in Fig.~\ref{fig:bibd_TB_bound}.    
	 \end{example} 
\begin{conj}[Hamada's Conjecture (1.91 in \cite{ColDin})]
	The p- rank of any design $D$ with parameters of a geometric design $G$ in PG(n,q) or AG(n,q) $(q=p^m)$ is at least the p-rank of $G$ with equality if and only if $D$ is isomoprhic to $G$.
\end{conj}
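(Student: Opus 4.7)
Hamada's conjecture has stood open since the early 1970s and is only settled in restricted regimes, so what follows is a strategy rather than a plausible path to a full proof. The plan breaks into two parts: a lower bound $\text{rank}_p(D) \geq \text{rank}_p(G)$, and a rigidity statement forcing $D \cong G$ whenever equality holds.

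For the inequality, the starting observation is that, since $D$ and $G$ share identical $2$-design parameters $(v, b, r, \lambda)$, their point--block incidence matrices $M_D$ and $M_G$ satisfy $M_D M_D^{\top} = M_G M_G^{\top} = (r - \lambda) I + \lambda J$ as integer matrices. This pins down the spectrum over $\mathbb{Q}$ but says nothing directly about the $\mathbb{F}_p$-rank. The plan is to pass to the $p$-adic integers, compute the Smith normal form of $(r-\lambda) I + \lambda J$ (which is elementary), and then use the Smith-form inequalities $s_i(AB) \succeq s_i(A)\cdot s_i(B)$ to derive a lower bound on the $p$-adic elementary divisors of $M_D$. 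On the geometric-design side, the $\mathbb{F}_p[\text{P}\Gamma\text{L}]$-module decomposition of the code of $G$ is well known for both $\text{PG}(n, q)$ and $\text{AG}(n, q)$ and yields $\text{rank}_p(G)$ in closed form; one then verifies that the lower bound extracted for $\text{rank}_p(D)$ matches this value.

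For the equality case, the plan is to show that when $\text{rank}_p(D) = \text{rank}_p(G)$ the $\mathbb{F}_p$-code $C_D$ spanned by the rows of $M_D$ must coincide, up to a permutation of coordinates, with the code $C_G$ of the geometric design. The intermediate step one would like is an intrinsic characterization of $C_G$, for instance as the unique $\mathbb{F}_p$-code of dimension $\text{rank}_p(G)$ whose minimum-weight codewords form a $2$-design with the prescribed parameters. One would then recover the blocks of $D$ from the minimum-weight supports of $C_D$ and produce an explicit isomorphism $D \cong G$.

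The main obstacle is the equality half, which is precisely where the conjecture has resisted attack. Even the inequality is only available in special subfamilies (projective planes of small prime order, certain affine designs), because the module-theoretic input enjoyed by geometric designs has no evident counterpart for an arbitrary design with matching parameters. The rigidity step is harder still: $(v, b, r, \lambda)$ together with a fixed $p$-rank does not in general determine a design, so any successful argument would need to identify a new algebraic invariant that is sensitive to combinatorial perturbations of the blocks and monotone in the $p$-rank. I expect this is the step where any honest attempt stalls, which is consistent with the half-century status of the conjecture and with the fact that the authors here invoke it only to motivate optimality rather than to derive it.
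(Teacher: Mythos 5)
There is nothing in the paper to compare your attempt against: the statement is Hamada's conjecture, which the paper explicitly presents as a \emph{conjecture} (citing the Handbook of Combinatorial Designs) and does not prove. The authors invoke it only to argue that, \emph{if} the conjecture holds for the relevant design, then the Steiner-triple-system code of Example~\ref{eg:sts} has minimum block length and maximum rate among binary codes with those parameters; they then note that the conjecture has been verified in the literature for that particular Steiner triple system. So the paper's ``proof'' of this statement is empty by design, and your own submission is, as you say yourself, a strategy sketch rather than a proof. The verdict is that there is a genuine gap --- indeed the entire argument is a gap --- but this is exactly the expected outcome for an open problem, and you deserve credit for recognizing and saying so rather than manufacturing a fake argument.

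Two concrete remarks on the strategy you outline. First, the Smith-normal-form route for the inequality $\mathrm{rank}_p(D)\geq\mathrm{rank}_p(G)$ runs into the difficulty that $M_D M_D^{\top}=(r-\lambda)I+\lambda J$ constrains only the product, and the submultiplicativity of elementary divisors gives bounds in the wrong direction for extracting a \emph{lower} bound on $\mathrm{rank}_p(M_D)$ alone; this is precisely why the inequality half remains open outside special families. Second, and more seriously, the rigidity step you propose (recovering the blocks from minimum-weight codewords and concluding $D\cong G$) cannot work in full generality: counterexamples of Tonchev and of Jungnickel--Tonchev exhibit designs with the parameters of certain geometric designs that attain the same $p$-rank yet are not isomorphic to the geometric design, so the ``equality only if isomorphic'' direction of the conjecture as stated is actually false in some cases. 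Any honest treatment in the paper's context should therefore lean, as the authors do, on the specific verified instances (here, the Steiner triple systems from $PG(s-1,2)$) rather than on the general conjecture.
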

	This conjecture has been shown to hold true for the Steiner Triple system appearing in Example \ref{eg:sts}. Thus the code in Example \ref{eg:sts} has, as a binary code, the minimum possible block length and maximum possible rate among all binary $(\frac{m-1}{2}-1,3)_{par}$ codes with $n=\frac{m(m-1)}{6}$ and $m=2^s-1$.

\begin{figure}
	\begin{center}
		\begin{minipage}{3in}
			\begin{center}
				\includegraphics[trim= 1in  1in 1in 1in, width=2in]{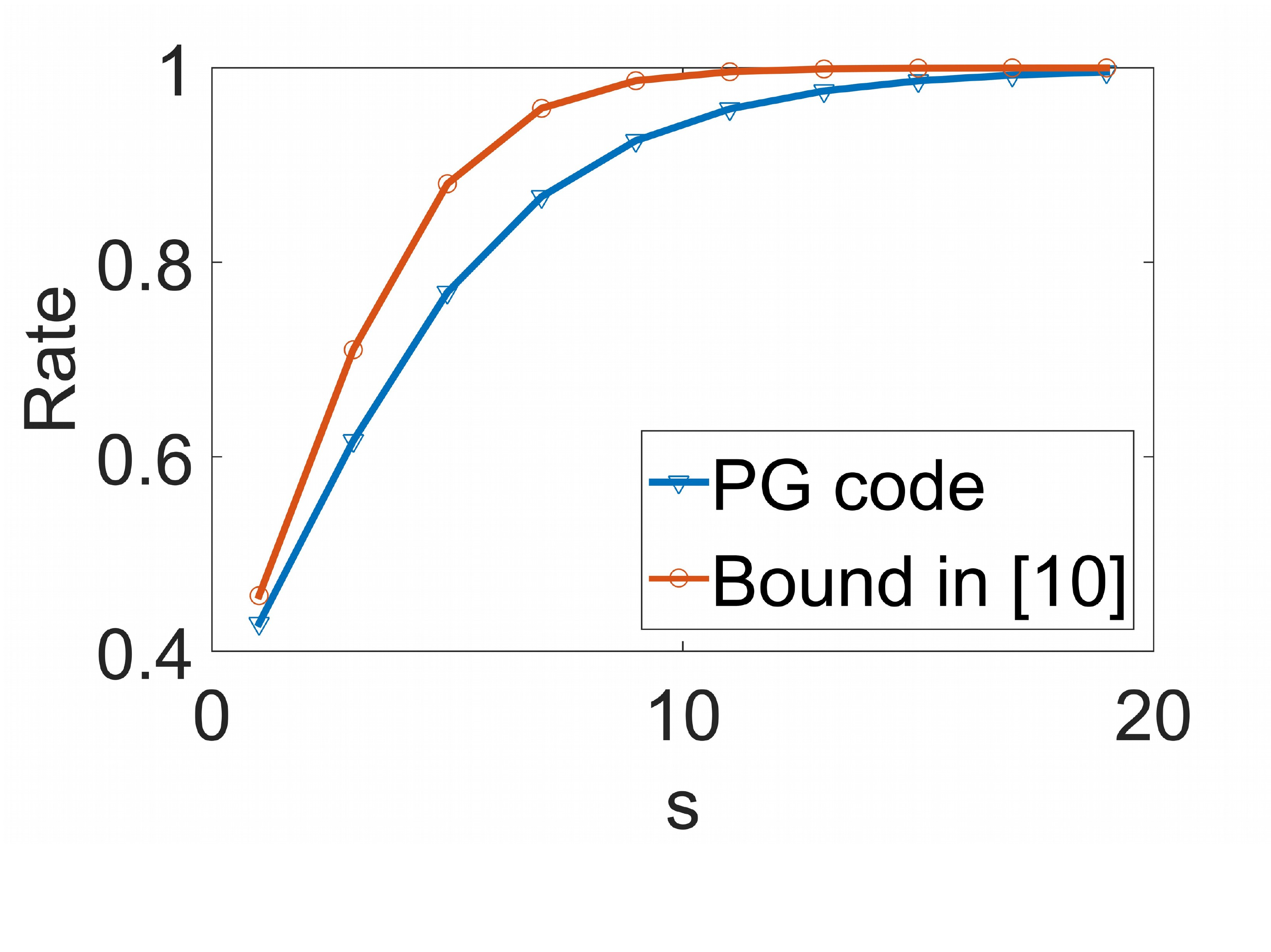}%trim l b r t
			\end{center}
		\end{minipage} \hspace{0.5in} 
		\begin{minipage}{3in}
			\begin{center}
				\includegraphics[trim= 1in  1in 1in 1in, width=2in]{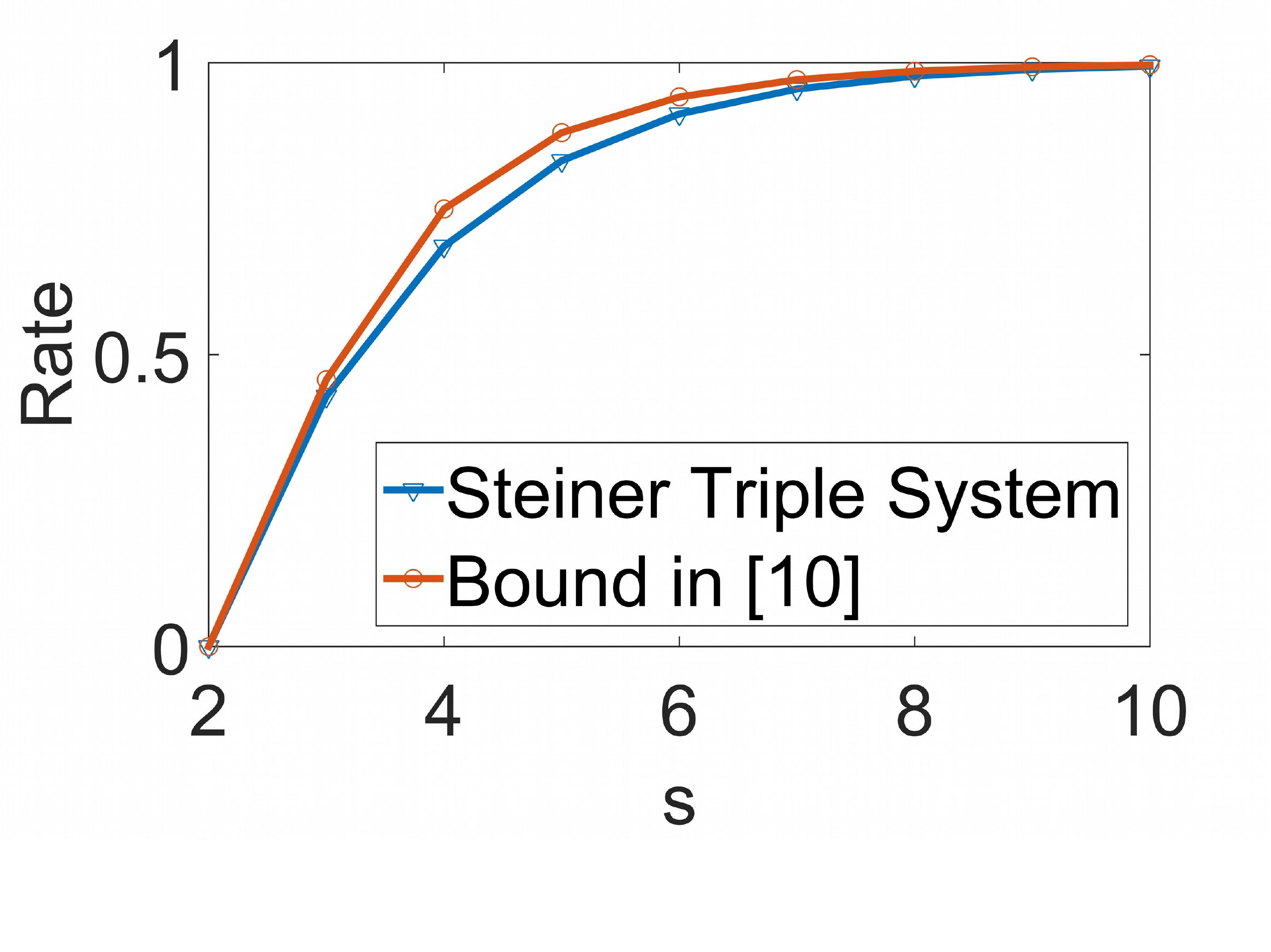}%trim l b r t
			\end{center}
		\end{minipage} 
	\end{center}
	\caption{Comparing the rates of the projective plane and Steiner-triple-system-based codes with the bound in \cite{TamBarFro}.}
	\label{fig:bibd_TB_bound}
\end{figure}

%	\item For $t=3$, the code can be obtained from Steiner triple system. In particular, the code can be obtained from the point-line incidence matrix of $PG(s-1,2)$. The code has parameters: Let $m=2^s-1$, $r+1=\frac{m-1}{2}$, $t=3$,$ [\frac{m(m-1)}{6},\frac{m(m-1)}{6}-m+s,4]$. This code also has rate $\geq \frac{r}{r+3}$
%	\item For the design given above for $t=3$, Hamada's conjecture has been proven, so the codes are rate-optimal for the given parameters.

\section{Codes with Sequential Recovery from Two Erasures} \label{sec:2_erasures} 

In \cite{PraLalKum}, it is shown that for either sequential or parallel recovery from $t=2$ erasures with locality $r$, the rate of the code is upper bounded by 
\bea
\frac{k}{n} \leq \frac{r}{r+2}  \label{eq:2erasure_rate}
\eea
For given dimension $k$ and locality parameter $r$, this leads to the lower bound $n  \geq  k + \lceil \frac{2k}{r}\rceil$ . The authors in \cite{PraLalKum} provide a construction of optimal codes where equality holds in \eqref{eq:2erasure_rate} .  The construction is based on Turan graphs and holds whenever  $r | 2k,  \text{ and in addition, }  \frac{2k}{r} \ = \ (r+\beta), \text{ for some $\beta | r$  }.$ 
The construction given in \cite{SonYue_3_Erasure} for $t=2$ sequential erasure correction requires $\lfloor\frac{k}{r}\rfloor \geq r$.

In the present paper, we present a simple construction that has minimum possible block length for given $\{r,k\}$ and that holds for a larger set of parameters, as it only requires that 
$\left\lceil \frac{2k}{r}\right\rceil  \geq   r+1 \text{   for } r \mid 2k. \text{ and } \left\lceil \frac{2k}{r}\right\rceil  \geq   r+2 \text{   for } r \nmid 2k$.

\begin{const}   [Sequential recovery from $t=2$ erasures]  \label{const:reg_graph}
	Let $2k=ar+b , 0\leq b \leq r-1$. Let $\mathcal{G}$ be a graph on a set of $m= \lceil \frac{2k}{r} \rceil$ nodes with `$a$' nodes having degree $r$ and an additional node having degree $b$, for the case when $b>0$.  Let each edge in the graph represent an information symbol in the code and each node represent a parity check symbol which corresponds to the sum of the information symbols corresponding to the edges connected to that node. The code is systematic and is defined by the information symbols corresponding to edges and the parity symbols corresponding to the nodes of $\mathcal{G}$. The dimension of this code, i.e., the number of information symbols, is clearly equal to $k$ and block length $n= k + \lceil \frac{2k}{r}\rceil$.   Since each parity check symbol represented by a node is the sum of at most $r$ information symbols, the corresponding parity check involves at most $r+1$ code symbols. Thus the code has locality $r$. It is straightforward to see that the code can sequentially recover from $2$ erasures.  
\end{const}
Noting that the graph $\mathcal{G}$ is regular in the case, $b=0$, we will refer to the code described in Construction \ref{const:reg_graph} with $b=0$ as the \emph{Regular Graph Code}.   The parameter sets $(k,r)$ for which the graph  $\mathcal{G}$ of the form described in Construction~\ref{const:reg_graph} exists can be determined from the Erd\"os-Gallai theorem \cite{ErdosGallai} and the parameters sets turns out to be $\{(k,r): \lceil \frac{2k}{r} \rceil = m\geq r+1 \}$ and $\{(k,r): \lceil \frac{2k}{r} \rceil = m\geq r+2 \}$ , when $b=0$ and $b>0$ respectively. 
	
\subsection{Uniqueness of Rate Optimal Codes for $2$ Erasures}
	In \cite{PraKamLalKum}, Prakash et. al. introduced the class of $(r,\delta)$ codes which have MDS codes having $d_{\min}>2$ as the local codes. We will refer to these codes for better clarity as $(r,\delta)_{MDS}$ codes. In this section, we prove that a rate-optimal sequential code- with-locality for $2$ erasures must have a specific form.
 
 Prakash et. al. \cite{PraLalKum} derived bound \eqref{eq:2erasure_rate} on the rate of a $(r,t=2)_{seq}$ code.

 From the derivation of this rate bound  given in \cite{PraLalKum}, it is straightforward to see that an $[n,k,d]$ code with locality $r$ and 2 sequential erasure correction, which achieves the bound \ref{eq:2erasure_rate} will have a parity check matrix (after possible permutation of code symbols) of the form $[I|H']$, where $I$ denote the identity matrix of order $n-k$ and $H'$ is a $(n-k)\times k$ matrix with all columns having a hamming weight of $2$ and all rows having a hamming weight of $r$.
\begin{thm}
	A code with locality $r$ capable of sequential recovery from $2$ erasures and achieving the rate upper bound in \eqref{eq:2erasure_rate} must fall (upto coordinate permutation) into one of the following classes:
	\ben
	\item A regular graph code (possibly defined over a larger field with coefficients from the larger field in place of $1$'s in the parity check matrix),
	\item A $(r,3)_{MDS}$ code in Prakash et. al. \cite{PraKamLalKum}.
	\item A code that is the direct product ${\cal C} = {\cal C}_1 \times {\cal C}_2$ where ${\cal C}_1$ is a regular graph code (possibly defined over a larger field with coefficients from the larger field in place of $1$'s in the parity check matrix) and ${\cal C}_2$ is a $(r,3)_{MDS}$ code in \cite{PraKamLalKum}.
	\een	
\end{thm}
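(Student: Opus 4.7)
The plan is to work with the parity-check form $[I\mid H']$ established just before the theorem statement, reading $H'$ as the incidence-style matrix of a multigraph $\mathcal{G}$: the $n-k$ vertices are the rows of $H'$, and each weight-$2$ column supported on rows $i,j$ with non-zero entries $(\alpha,\beta)$ yields a labelled edge $\{i,j\}$. The column-weight $2$ and row-weight $r$ conditions force $\mathcal{G}$ to be $r$-regular with $k$ edges, and the $i$-th row of $[I\mid H']$ is exactly the star of vertex $i$ together with the parity bit $p_i$. I would then classify $\codec$ by inspecting the edge-multiplicities $\ell_{ij}$. When $\ell_{ij}\le 1$ for every pair, $\mathcal{G}$ is simple, so any two distinct edges have at least one unshared endpoint, and the corresponding weight-$(r+1)$ row of $[I\mid H']$ already serves as the $h_1$ required for sequential recovery; this places $\codec$ in class~(1), a regular graph code, possibly with non-unit labels if the alphabet is larger than $\mathbb{F}_2$.

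The heart of the argument is the case $\ell_{ij}\ge 2$, which I claim forces $(i,j)$ to be isolated and its incident edges to form an MDS local code. Let $x_1,\dots,x_{\ell_{ij}}$ be the edges between $i$ and $j$ with entries $(\alpha_k,\beta_k)$; since only rows $i,j$ touch these symbols, any dual codeword that kills $x_2$ has the form $c_i\,\mathrm{row}_i+c_j\,\mathrm{row}_j+\sum_{v\neq i,j}c_v\,\mathrm{row}_v$ with $c_i\alpha_2+c_j\beta_2=0$. If some $x_k$ shared the ratio $\beta_2/\alpha_2$, the same combination would also kill $x_k$, leaving no dual codeword of weight $\le r+1$ that separates $\{x_2,x_k\}$ after their simultaneous erasure and contradicting sequential recovery. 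So the ratios $\{\beta_k/\alpha_k\}$ are pairwise distinct, and the $2\times(\ell_{ij}+2)$ block on columns $\{p_i,p_j,x_1,\dots,x_{\ell_{ij}}\}$ is MDS. With distinct ratios, the combination $c_i\,\mathrm{row}_i+c_j\,\mathrm{row}_j$ alone has support of size $2r-\ell_{ij}+1$ (the two parities, the $\ell_{ij}-1$ surviving multi-edges, and the $2(r-\ell_{ij})$ edges at $i$ or $j$ leaving the pair). To bring this within the budget $r+1$ by incorporating further rows $\mathrm{row}_k$, each cancelled stray edge $(i,k)$ must be paid for with the parity $p_k$ and releases the other $r-1$ edges at $k$ into the support; a careful accounting, by induction on the number of auxiliary rows used or equivalently by a cut argument on the subgraph of vertices appearing with non-zero coefficient, shows the weight cannot fall below $2r-\ell_{ij}+1$, forcing $\ell_{ij}=r$.

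Hence every multi-edge pair is vertex-disjoint from the rest of $\mathcal{G}$ and, thanks to the distinct-ratio property, hosts a length-$(r+2)$ $(r,3)_{MDS}$ local code, while the remaining vertices support a simple $r$-regular subgraph yielding a regular graph code. After reordering the coordinates the parity-check matrix becomes block diagonal, so $\codec=\codec_1\times\codec_2$ with $\codec_1$ a regular graph code and $\codec_2$ an $(r,3)_{MDS}$ code; the three listed cases correspond respectively to $\codec_2$ trivial, $\codec_1$ trivial, and both non-trivial. The main obstacle I foresee is precisely the weight lower bound in the previous paragraph: one has to rule out cascade-style cancellations in which many stray edges are eliminated simultaneously by a clever multi-row combination, which is where the induction on the incident subgraph, or a direct cut/flow lower bound on the support of a non-trivial dual codeword restricted to a prescribed vertex set, must be handled with care.
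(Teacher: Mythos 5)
Your overall strategy coincides with the paper's: write the parity-check matrix as $[I\mid H']$, view $H'$ as an $r$-regular multigraph, observe that the $2\times s$ block on a shared support must be MDS (your distinct-ratio argument is exactly the paper's linear-independence-of-columns argument), note that $s\le 1$ everywhere gives the regular graph code, that $s=r$ gives an isolated $(r,3)_{MDS}$ component, and that coexistence of the two types yields the direct product. The decisive step in both treatments is ruling out $1<s<r$, and this is precisely where your write-up has a genuine gap: you assert that ``a careful accounting, by induction on the number of auxiliary rows used or equivalently by a cut argument, shows the weight cannot fall below $2r-\ell_{ij}+1$,'' and you yourself flag the cascade-cancellation scenario as the unresolved obstacle. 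No such accounting is actually supplied, so the theorem is not proved. Moreover, the specific bound you claim (that the support size of any qualifying dual vector is at least $2r-\ell_{ij}+1$) is stronger than what is needed and is not what the paper establishes; what is needed is only that the weight exceeds $r+1$.

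The paper closes this gap with an explicit double count rather than an induction or cut argument. Suppose a dual vector $\mathbf{v}$ of weight at most $r+1$ separating the two erased shared columns is a combination of $l$ rows $R_1,\dots,R_l$ with pairwise support intersections $s_{ij}$ on the $H'$ part, and let $f$ be the number of columns of the corresponding $l\times k$ submatrix $L$ having a single nonzero entry. The weight of $\mathbf{v}$ is at least $l+\sum_{i<j,\,s_{ij}>0}(s_{ij}-1)+f$ (identity columns, MDS contributions of each shared block, and uncancellable single-entry columns), while counting nonzeros of $L$ by rows and columns gives $f+2\sum s_{ij}=lr$ and hence $\sum s_{ij}\le lr/2$. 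Substituting yields $l+lr-\binom{l}{2}\le r+1+lr/2$, i.e.\ $r\le l-1$ whenever $l>2$; but $l\ge r+1$ already forces weight at least $r+2$ from the identity columns together with the required nonzero erased coordinate. The case $l=2$ gives weight at least $(s-1)+2(r+1-s)=2r+1-s>r+1$ for $s<r$, your own computation. If you replace your appeal to an unspecified induction with this counting identity, your proof becomes complete and is then essentially identical to the paper's.
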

\begin{proof}
	Assume that $H$ is the parity check matrix of an $[n,k,d]$ code with locality $r$ and sequential repair capability of $t=2$, with rate $\frac{r}{r+2}$. As mentioned before, we can write (after possible permutation of code symbols)
	\beq
	H=[I|H']
	\eeq
	where $I$ denote the identity matrix of order $n-k$ and $H'$ is a $(n-k)\times k$ matrix with all columns having a hamming weight of $2$ and all rows having a hamming weight of $r$.
	Consider $2$ rows of $H$, $R_1$ and $R_2$. Assume that $support(R_1)$ and $support(R_2)$ intersect at columns $C_1,C_2...C_s$. Since all columns in $H'$ has weight exactly $2$, the columns $C_1,C_2...C_s$ will have non zero entries in $R_1$ and $R_2$ only. Let $A$ denote the $2\times s$ sub matrix obtained by considering the rows $R_1$ and $R_2$ and the columns $C_1,C_2...C_s$ only. In order to recover from any instance of $2$ erasures in the symbols corresponding to the columns of $A$, any two columns of $A$ must be linearly independent. Thus the $2\times s$ sub matrix $A$ forms the generator matrix of an $MDS$ code of block length $s$ and dimension 2. This also says that any vector, obtained by a linear combination of the two rows of $A$ will have a hamming weight at least $s-1$.\\
	Let us consider two extreme cases:\\
	\emph{Case 1: $s=r$:}
	This case corresponds to having $(r,3)_{MDS}$ locality introduced by Prakash et .al.\cite{PraKamLalKum} for the set of symbols in $support(R_1) \cup support(R_2)$ \\
	\emph{Case 2: $s\leq1$:} \\
	In this case two parity checks, represented by $R_1$ and $R_2$ have at most one symbol in common. \\
	If these are the only two cases that can occur for any pair of rows $R_i$ and $R_j$ i.e.,$|support(R_i) \cap support(R_j)| \in \{0,1,r\}$ then the set of code symbols can be partitioned into two sets, one set of symbols forming a regular graph code (with possibly higher field coefficients in place of 1's in the parity check matrix) and the other set of symbols forming $(r,3)_{MDS}$ code, with no parities across these 2 sets i.e., the code will be a direct product of regular graph code (with possibly higher field coefficients in place of 1's in the parity check matrix) and $(r,3)_{MDS}$ code (after possible permutation of code symbols). 
	
	Now, we will prove that $1<s<r$ is not possible for any pair of rows, where $s$ denote the size of the intersection of support, of the pair of rows.
	
	Wlog assume that $1<s<r$ for the pair of rows $R_1$ and $R_2$ i.e., $|support(R_1) \cap support(R_2)|=s$. Let $C_i,C_j$ be two columns belonging to the set of $s$ columns where $support(R_1)$ and $support(R_2)$ intersect. Assume that the symbols corresponding to $C_i$ and $C_j$ are erased. In order to sequentially repair these symbols locally, $R_1$ and $R_2$ must linearly combine with some of the remaining rows of $H$ to get a vector $\mathbf{v}$ with the following properties.
	\ben
	\item Hamming weight of $\mathbf{v}$ is less than or equal to $(r+1)$.
	\item $\mathbf{v}$ has a zero in the coordinate corresponding to $C_i$ and a non zero value in the coordinate corresponding to $C_j$, or vice versa.
	\een
	Assume that a linear combination of $l$ rows $\{R_1,R_2,R_3...R_l\}$ results in $\mathbf{v}$. Let $s_{ij}$ denote $|support(R_i)\cap support(R_j)|$. Clearly, $s_{12}=s$. If $s_{ij}>0$, we have shown that the $2\times s_{ij}$ sub matrix formed by the rows $R_i$ and $R_j$ and the columns in $support(R_i)\cap support(R_j)$ form a generator matrix of an MDS code of block length $s_{ij}$ and dimension 2 and they linearly combine to form a vector of hamming weight at least $s_{ij}-1$. Thus the hamming weight of $\mathbf{v}$ is at least $l + \sum_{1 \leq i<j\leq l,s_{ij}>0}^{}(s_{ij}-1) + f$, where the factor of $l$ comes from the identity part of $H$ (i.e., columns $1$ to $n-k$ of $H$) and $f$ comes from the single weight columns in the sub matrix $L$ formed by the rows $\{R_1,R_2,R_3...R_l\}$ and columns $n-k+1$ to $n$ of $H$.
	\begin{align}
		l + \sum_{1 \leq i<j\leq l,s_{ij}>0}^{}(s_{ij}-1) + f &\leq r+1 \notag\\ 
		l + \sum_{1 \leq i<j\leq l,s_{ij}>0}^{}s_{ij} - {l \choose 2} + f &\leq r+1 \notag\\ 
		l + 2 \sum_{1 \leq i<j\leq l,s_{ij}>0}^{}(s_{ij}) - {l \choose 2} + f &\leq r+1 + \sum_{1 \leq i<j\leq l,s_{ij}>0}^{}(s_{ij}) \label{eq:wt1}\\ 
		\text{Also, by counting the non zero entries in $L$ row wise and column wise} \notag \\
		 f+ 2\sum_{1 \leq i<j\leq l,s_{ij}>0}^{}s_{ij} &= lr \label{eq:wt2}\\
		\sum_{1 \leq i<j\leq l,s_{ij}>0}^{}s_{ij} &\leq \frac{lr}{2} \label{eq:wt3}
	\end{align}
	
	substituting \eqref{eq:wt2} and \eqref{eq:wt3} in \eqref{eq:wt1} gives: 
	\bean
	l + lr-{l \choose 2} \leq r+1 + \frac{lr}{2}
	\eean

	simplifying and assuming $l>2$ we get
	\bean
	r \leq l-1
	\eean
	Hence we get $l\geq r+1$, when $l>2$. But when $l\geq r+1$ the coordinates in the identity part (columns $1$ to $n-k$) will add a Hamming weight of $r+1$ to $\mathbf{v}$, making the Hamming weight of $\mathbf{v}$ greater than $r+1$ as it must also have a non zero $C_j^{th}$ or $C_i^{th}$ coordinate. Hence, if $1<s<r$, $l>2$ is not possible. 
	Now, assume $l=2$ i.e., a linear combination of $R_1$ and $R_2$ should give $\mathbf{v}$. The Hamming weight of a linear combination of $R_1$ and $R_2$ is at least $(s-1)+2(r+1-s)$ (weight $s-1$ comes from the  coordinates in $support(R_1)\cap support(R_2)$, and weight $2(r+1-s)$ comes from the remaining coordinates in $support(R_1)\cup support(R_2)$). We need,
	\begin{align*}
		(s-1)+2(r+1-s) &\leq r+1\\
		s &\geq r
	\end{align*}
	which is not possible as  $1<s<r$. Hence $l=2$ is also not possible. Hence putting together we have $l \geq 2$ not possible but for $1<s<r$, but we need to linearly combine $l \geq 2$ rows to get $\mathbf{v}$. Hence $1<s<r$ is not possible.
\end{proof}

\section{Codes with Sequential Recovery from Three Erasures} \label{sec:3_erasures} 
 
We first present an improved lower bound on the block length of binary codes that can sequentially recover from $3$ erasures for $k \leq r^{1.8}-1$.  This is followed by the construction of a short block length code that generalizes an instance of the Turan graph construction in \cite{PraLalKum}.

\subsection{Bound on Minimum Block Length} \label{sec:t_eq_3_bound} 
 
	In \cite{SonYue_3_Erasure}, W. Song et. al. derived the following lower bound on the block length of codes that can sequentially recover from three erasures.
	\bea
	n\geq k+\left\lceil\frac{2k+\lceil\frac{k}{r}\rceil}{r}\right\rceil  \label{eq:song_3}
	\eea
Constructions were also provided of codes meeting the above bound for $\lceil \frac{k}{r} \rceil \geq r$.   Here, we present a new bound on block length. Simulation shows that the new bound is tighter than \eqref{eq:song_3} for $r \leq k \leq r^{\tiny{1.8}}-1$ for $1 \leq r \leq 200$. We also provide a few sporadic examples where this bound turns out to be tight.   For binary codes, our bound takes on the form:
\bea
n \geq k+\min_{s_1} \max \{f_1(s_1), f_2(s_1), s_1 \} ,\label{eq:min_len}
\eea
\bean 
\text{where}\hspace{30pt} f_1(s_1) =   \left\lceil \frac{-(2r-5)+\sqrt{(2r-5)^2 +4(6k+s_1^2-5s_1)}}{2} \right\rceil \\
f_2(s_1) = \left\lceil \frac{-(4r-4+2s_1)+\sqrt{(4r-4+2s_1)^2 +4(12k+3s_1^2-4s_1-7)}}{2} \right\rceil .
\eean
\begin{proof}
	Let $H'$ be the parity check matrix of an $(r,3)_{seq}$ code $\cal{C}$ (possibly having global parities), with $m'$ linearly independent rows, block length $n$ and dimension $k$. 
	Let 
	\bean
	   B_0 = span(h : weight(h) \leq r+1, h \in \cal{C^{\perp}})
	\eean
	Let $\{c_1,...,c_m\}$ be a basis of $B_0$ with $weight(c_i) \leq r+1$. Let
	\bean
	H & = & 
	\left[ \begin{array}{c} 
		c_1  \\
		\vdots \\
		c_m   \end{array} \right].
	\eean
	Let's extend the basis $\{c_1,...,c_m\}$ of $B_0$ to a basis of $\cal{C^{\perp}}$ and form the parity check matrix $H'$ of $\cal{C}$ with this basis of $C^{\perp}$. Hence 
	\bean
	H' & = & 
	\left[ \begin{array}{c} 
		H  \\
		H_1   \end{array} \right].
	\eean
	where $H_1$ contains the extra vectors coming from extending the basis $\{c_1,...,c_m\}$ of $B_0$ to a basis of $\cal{C^{\perp}}$. The number of row vectors in the above matrix must be $m'$.
	Since $k$ is the dimension of the code $\cal{C}$ with parity check matrix $H'$ then $n=k+m' \geq k+m$. Now $H$ is a parity check matrix of a $(r,3)_{seq}$ code with same block length $n$, as its row span has all the parities of weight $\leq r+1$ of $\cal{C^{\perp}}$ defined by $H'$.
	Now we consider the code defined by the parity check matrix $H$ with $m$ linearly independent rows which is a $(r,3)_{seq}$ code and derive a lower bound on $m$ as a function of $k$ and $r$. Using $n=k+m' \geq k+m$ and the derived lower bound on $m$, we get a lower bound on $n$.
	
	Let $s_1,s_2$ be the number of columns of $H$ with weights $1$ and $2$ respectively. Then by simple counting of non zero entries of $H$ row wise and column wise:
	\bea
	s_1+2 s_2 +3(n-s_1-s_2) \leq m(r+1) \notag \\
	3n-m(r+1)-2s_1 \leq s_2 \label{eq:bound_s2}
	\eea
	Permute the columns and rows of $H$ matrix such that :\\
	\bean
	H & = & 
	\left[ \begin{array}{cc} 
		D_{s_1} & A  \\
		0 & B   \end{array} \right].
	\eean
	
	where $D_{s1}$ is a diagonal matrix of order $s_1$ with non zero diagonal entries. Now the $s_2$, two-weight columns are to the right of $D_{s_1}$. In these $s_2$ columns, we cannot have a column with 2 non zero entries in the first $s_1$ rows, as this would imply $d_{min}<=3$ (where $d_{min}$ is the minimum distance of the code defined by the parity check matrix $H$) as the code defined by parity check matrix $H$ is also a $(r,3)_{seq}$ code and hence $d_{min} \geq 4$. Hence :\\
	Let $f_1$ = number of columns of weight $2$ with exactly one non zero entry in the first $s_1$ rows.\\
	$f_2$ = number of columns of weight $2$ with both non zero entries in the last $m-s_1$ rows.\\
	\begin{align*}
	s_2 &=f_1+f_2 \\
	f_1 &\leq s_1(m-s_1) \\
	f_2 &\leq N(m-s_1,2,4)
	\end{align*}
	where $N(m-s_1,2,4)$ is the maximum number of columns in a parity check matrix with $m-s_1$ rows and column weight 2, of a code with $d_{min}>=4$.
	Restricting to binary codes, it is straightforward to see that,
	$N(m-s_1,2,4) \leq {m-s_1 \choose 2}$.
	With a little bit of thought this can be tightened to:\\
	\begin{align}
	N(m-s_1,2,4) &\leq \frac{(m-s_1+3)(m-s_1+1)}{4}+1\notag\\
	\text{Hence,  } \
	s_2  &\leq s_1(m-s_1) + {m-s_1 \choose 2} \label{eq:bin_1}\\
	s_2  &\leq s_1(m-s_1) + \frac{(m-s_1+3)(m-s_1+1)}{4}+1 \label{eq:bin_2}
	\end{align}
	
	Hence substituting both the above bounds \eqref{eq:bin_1},\eqref{eq:bin_2} on $s_2$ in \eqref{eq:bound_s2}:\\
	\begin{align}
	3n-m(r+1)-2s_1  &\leq s_1(m-s_1) + {m-s_1 \choose 2} \label{eq:binary_1}\\
	3n-m(r+1)-2s_1  &\leq s_1(m-s_1) + \frac{(m-s_1+3)(m-s_1+1)}{4}+1 \label{eq:binary_2}
	\end{align}

	\eqref{eq:binary_1} (On using $n \geq k+m$) leads to:
	\bean
	m^2 + m(2r-5)-(6k+s_1^2-5s_1) \geq 0
	\eean
	which gives:\\
	\bean
	m \geq \lceil \frac{-(2r-5)+\sqrt{(2r-5)^2 +4(6k+s_1^2-5s_1)}}{2} \rceil=f_1(s_1) \\
	\geq \lceil \frac{-(2r-5)+\sqrt{(2r-5)^2 +4(6k-6)}}{2} \rceil
	\eean
	which when added with $k$ gives a better lower bound on $n$ than\eqref{eq:song_3}, over some parameter range in $k \leq r^2$. 
	
	\eqref{eq:binary_2} (On using $n \geq k+m$) leads to:
	\bean
	m^2 + m(4r-4+2s_1)-(12k+3s_1^2-4s_1-7) \geq 0
	\eean
	which gives:\\
	\begin{align*}
	m &\geq \left\lceil \frac{-(4r-4+2s_1)+\sqrt{(4r-4+2s_1)^2 +4(12k+3s_1^2-4s_1-7)}}{2} \right\rceil \\
	&=f_2(s_1) 
	\end{align*}
	Taking $m \geq min_{s_1} max(f_1(s_1),f_2(s_1),s_1)$ and using $n \geq k+m$, we get 
	\begin{align*}
	n \geq k+min_{s_1} max(f_1(s_1),f_2(s_1),s_1)
	\end{align*}
	Simulation shows that this bound is tighter than Song et. al. bound \eqref{eq:song_3} for $k\leq r^{1.8}-1$ for $1 \leq r \leq 200$.
\end{proof}
We provide some examples which achieve the bound \eqref{eq:min_len}. 
\ben
\item When $r=4,k=8,t=3$, our bound \eqref{eq:min_len} gives $n \geq 14$ whereas the bound in  \eqref{eq:song_3} gives $n \geq 13$.  The binary code associated to the parity-check matrix shown below achieves our tighter bound and hence represents a code of minimum possible block length for $k=8,t=3$. 
\bea
H  =  \left[  \begin{array}{cccccc|cccccccc}
		1 & 0 & 0 & 0 & 0 & 0 & 1 & 1 & 1 & 1 & 0 & 0 & 0 & 0 \\ 
		0 & 1 & 0 & 0 & 0 & 0 & 0 & 0 & 0 & 0 & 1 & 1 & 1 & 1 \\ 
		0 & 0 & 1 & 0 & 0 & 0 & 1 & 1 & 0 & 0 & 1 & 1 & 0 & 0 \\ 
		0 & 0 & 0 & 1 & 0 & 0 & 0 & 0 & 1 & 1 & 0 & 0 & 1 & 1 \\ 
		0 & 0 & 0 & 0 & 1 & 0 & 1 & 0 & 1 & 0 & 1 & 0 & 1 & 0 \\ 
		0 & 0 & 0 & 0 & 0 & 1 & 0 & 1 & 0 & 1 & 0 & 1 & 0 & 1
		\end{array} \right] 
		\label{eq:n_eq_14_code} 
\eea
This code is an example of a more general construction presented below in the next subsection (see Fig.~\ref{fig:turan3}). 
\item $n=28,r=7,k=20,t=3$ : (\eqref{eq:song_3} gives $n \geq 27$ for $k=20,r=7$. Our bound \eqref{eq:min_len} gives $n \geq 28$ for $k=20,r=7$. Hence the binary code associated with the parity check matrix given below has the least block length for a binary code for $k=20,r=7$.)
\bean
H = 
\left[ \begin{array}{cccccccccccccccccccccccccccc} 
		1 & 0 & 0 & 0 & 0 & 0 & 0 & 1 & 0 & 0 & 0 & 0 & 0 & 0 & 0 & 1 & 0 & 1 & 0 & 0 & 1 & 1 & 0 & 0 & 0 & 0 & 1 & 1 \\
		0 & 1 & 0 & 0 & 0 & 0 & 0 & 0 & 1 & 0 & 0 & 0 & 0 & 0 & 1 & 0 & 1 & 0 & 0 & 1 & 0 & 1 & 1 & 0 & 0 & 0 & 0 & 1 \\
		0 & 0 & 1 & 0 & 0 & 0 & 0 & 0 & 0 & 1 & 0 & 0 & 0 & 0 & 0 & 1 & 0 & 0 & 1 & 0 & 1 & 1 & 1 & 1 & 0 & 0 & 0 & 0 \\
		0 & 0 & 0 & 1 & 0 & 0 & 0 & 0 & 0 & 0 & 1 & 0 & 0 & 0 & 1 & 0 & 0 & 1 & 0 & 1 & 0 & 0 & 1 & 1 & 1 & 0 & 0 & 0 \\
		0 & 0 & 0 & 0 & 1 & 0 & 0 & 0 & 0 & 0 & 0 & 1 & 0 & 0 & 0 & 0 & 1 & 0 & 1 & 0 & 1 & 0 & 0 & 1 & 1 & 1 & 0 & 0 \\
		0 & 0 & 0 & 0 & 0 & 1 & 0 & 0 & 0 & 0 & 0 & 0 & 1 & 0 & 0 & 1 & 0 & 1 & 0 & 1 & 0 & 0 & 0 & 0 & 1 & 1 & 1 & 0 \\
		0 & 0 & 0 & 0 & 0 & 0 & 1 & 0 & 0 & 0 & 0 & 0 & 0 & 1 & 1 & 0 & 1 & 0 & 1 & 0 & 0 & 0 & 0 & 0 & 0 & 1 & 1 & 1 \\
		0 & 0 & 0 & 0 & 0 & 0 & 0 & 1 & 1 & 1 & 1 & 1 & 1 & 1 & 0 & 0 & 0 & 0 & 0 & 0 & 0 & 0 & 0 & 0 & 0 & 0 & 0 & 0 \\
	\end{array}  \right].
	\eean
\item	$n=10,r=3,k=5,t=3$ : (\eqref{eq:song_3} gives $n \geq 9$ for $k=5,r=3$. Our bound \eqref{eq:min_len} gives $n \geq 10$ for $k=5,r=3$. Hence the binary code associated with the parity check matrix given below has the least block length for a binary code for $r=3,k=5$.)
	\bean
	H & = & 
	\left[ \begin{array}{cccccccccc} 
		1 & 0 & 0 & 0 & 1 & 0 & 0 & 0 & 1 & 1 \\
		0 & 1 & 0 & 0 & 0 & 1 & 0 & 0 & 1 & 0 \\
		0 & 0 & 1 & 0 & 0 & 0 & 1 & 0 & 1 & 1 \\
		0 & 0 & 0 & 1 & 0 & 0 & 0 & 1 & 0 & 1\\
		0 & 0 & 0 & 0 & 1 & 1 & 1 & 1 & 0 & 0\\
	\end{array} \right].
	\eean
	\een
\subsection{A Hypergraph-Based Construction for $t=3$} \label{sec:turan}
	
The construction of a binary code presented below may be viewed as a generalization of an instance of the Turan-graph-based construction presented in \cite{PraLalKum} for the sequential recovery of $2$ failed nodes.
	
\begin{const} \label{const:turan} 
Set $b=3\beta$ for some parameter $\beta \geq 1$.   Let ${\cal G}$ be a hypergraph on $b$ nodes constructed by first partitioning the $b$ nodes into three subsets of nodes, labelled as $A_1,A_2,A_3$, each having $\beta$ nodes.  Next, for every triple of nodes $n_1,n_2,n_3$, $n_i \in A_i$, a hyperedge is placed that connects these three nodes.  Thus there are a total of $\beta^3$ hyperedges.  

Each hyperedge is then associated to a unique message symbol and each of the nodes in the node subsets $A_i$ is associated with a parity-check symbol. The parity-check symbol associated to a node $n_i$ is the sum of all the message symbols associated with the hyperdges connected to $n_i$. The code is defined by the set of message symbols assosiated with hyperedges and the parity check symbols assosiated with the nodes of ${\cal G}$. Thus the code has dimension $k=\beta^3$ and block length $n=3\beta+\beta^3$.  It can be shown that this code has minimum distance $4$, hence the code \codec\ has parameters $[n=\beta^3+3\beta, k=\beta^3, d=4]$.  It follows that the code has rate $\frac{\beta^3}{\beta^3+3\beta} = \frac{r}{r+3}$.

\begin{claim}
	Construction \ref{const:turan} gives a $(r=\beta^2,3)_{seq}$ code with $[n=\beta^3+3\beta, k=\beta^3, d=4]$.
\end{claim}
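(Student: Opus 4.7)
My plan is to verify in sequence: the parameters, the locality $r = \beta^2$, sequential recovery from three erasures, and the minimum distance $d=4$. For the first two, I would begin by observing that each node $n \in A_i$ lies on exactly $\beta^2$ hyperedges (one per choice from the two other partitions), so the parity check at $n$ has weight $\beta^2 + 1 = r+1$ as required. Since each parity symbol $p_n$ appears in only one such row, the $3\beta$ node parity checks are easily seen to be linearly independent (their projections onto the $p$-coordinates form the standard basis), giving rank exactly $3\beta$ and hence $k = \beta^3$, confirming $n = \beta^3 + 3\beta$.

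For sequential recovery, my approach is to use only the $3\beta$ node parity checks. I would schedule any erased parity symbols $p_n$ last, since each is determined directly from the messages once those are known, and focus on scheduling the erased message symbols $m_{e_{i_1}}, \ldots, m_{e_{i_s}}$; at step $j$, I must exhibit a node $n_j \in e_{i_j}$ whose parity check avoids every later erased hyperedge and every still-erased parity symbol. For $s \leq 2$ this is routine using only the fact that distinct hyperedges share at most two nodes; a small swap handles the edge case where the unique private node of a hyperedge happens to coincide with an erased parity node. The interesting case $s=3$ with three erased messages will reduce to the combinatorial claim: \emph{among any three distinct hyperedges $e_1, e_2, e_3$, at least one contains a node not lying in the union of the other two.}

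I would prove this key claim by contradiction: suppose $e_l \subseteq \bigcup_{l' \neq l} e_{l'}$ for every $l$. Writing $e_l = (a_l^{(1)}, a_l^{(2)}, a_l^{(3)})$ with $a_l^{(i)} \in A_i$, and noting that elements of $A_i$ appear only at position $i$, the three containments force, for each coordinate $i$, $a_l^{(i)} \in \{a_{l_1}^{(i)}, a_{l_2}^{(i)}\}$ under each cyclic relabeling. A brief case check shows that if $a_1^{(i)} \neq a_2^{(i)}$ then $a_1^{(i)} = a_3^{(i)}$ and $a_2^{(i)} = a_3^{(i)}$, forcing $a_1^{(i)} = a_2^{(i)}$, a contradiction; hence $a_1^{(i)} = a_2^{(i)} = a_3^{(i)}$ for every $i$, so $e_1 = e_2 = e_3$, contradicting distinctness. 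Placing a hyperedge with a private node first, and noting that the remaining two still differ in at least one node, yields the schedule. The main bookkeeping obstacle will be the uniform handling of the mixed cases where erased parity nodes accompany message erasures, but this reduces to the simple observation that private nodes of distinct hyperedges are themselves distinct and there are at most two erased parities in total.

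Finally, $d_{\min} \geq 4$ follows immediately from sequential recovery of three erasures, since otherwise some codeword of weight at most three would supply an uncorrectable erasure pattern on its own support. For $d_{\min} \leq 4$, I would exhibit the following weight-$4$ codeword: for any hyperedge $e = (n_1, n_2, n_3)$, set $m_e = 1$, $p_{n_1} = p_{n_2} = p_{n_3} = 1$, and all other coordinates to zero. At each $n_i$ the parity check reads $p_{n_i} + m_e = 0$, while at every other node the parity involves no nonzero term; hence this is indeed a codeword, giving $d_{\min} = 4$ and completing the verification.
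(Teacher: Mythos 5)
Your proposal is correct and establishes everything the claim asserts, but it organizes the hardest case differently from the paper. For three erased message symbols, the paper fixes two of the hyperedges $i,j$ and splits on the size of $S_i\cap S_j$: if $|S_i\cap S_j|<2$ the symmetric difference $T_{ij}$ has at least $4$ nodes and the third hyperedge can touch at most $3$ of them, while if $|S_i\cap S_j|=2$ the two hyperedges differ only in their nodes in the third partition and the third hyperedge meets that partition in a single node. Your route instead proves one clean lemma --- among any three distinct hyperedges, some hyperedge has a node outside the union of the other two --- using the product structure (one node per partition forces, coordinatewise, that mutual containment implies equality). Both arguments are elementary and correct; yours has the advantage of producing the first symbol of the recovery schedule directly and of avoiding the two-way case split, while the paper's version makes the counting of ``spoiled'' parity checks by the third erasure more explicit. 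Your treatment of the mixed cases (at most two message erasures plus erased parities) matches the paper's in substance: both rest on $|T_{ij}|\geq 2$ and the fact that a single erased parity node can block at most one of the two available repair checks. One genuine addition on your side: you actually verify $d_{\min}=4$ in both directions (the lower bound from $3$-erasure correction and an explicit weight-$4$ codeword supported on one hyperedge and its three parity nodes), whereas the paper only states ``it can be shown'' for the minimum distance.
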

\begin{proof}
	The proof proceeds by showing that for every instance of $3$ code symbol erasures, there is at least one parity check of weight $r+1$ whose support contains exactly one of the erased symbols, and hence the corresponding symbol can be recovered and subsequently the remaining two symbols can also be recovered using local parities.
	Assume that $x$ number of information symbols are erased and $3-x$ parity symbols are erased. We consider the cases $x=0,x=1,x=2,x=3$ and prove each case.\\
	\emph{case 1 : $x=0$}\\
	Each parity symbol is calculated from $r=\beta^2$ information symbols associated with it. Hence any number of erasures among the parity symbols can be repaired locally if none of the information symbols are erased.\\	
	\emph{Case 2 : x = 1}\\
	Each hyper edge is connected to 3 parity nodes. At most two parity nodes are
	allowed to fail in this case. So the erased information symbol can be recovered using the third parity symbol which is not erased.\\	
	\emph{Case 3 : x = 2}\\
	Let $S_i$ denote the set of nodes $\{n^{(i)}_1,n^{(i)}_2,n^{(i)}_3\}$ which are connected by the $i^{th}$ hyperedge. It is easily checked that $\mid S_i \cap S_j \mid \leq 2 \text{ and therefore } \mid S_i \cup S_j \mid \geq 4 \text{ for all } 1\leq i\neq j \leq \beta^3$. Assume that $i$ and $j$ are the hyperedges representing the erased information symbols $I_i$ and $I_j$ respectively. Let the set $T_{ij}$ be defined as:
	\bean
	T_{ij} & = &\{S_i \cup S_j\} \backslash \{S_i \cap S_j\}\\
	\mid T_{ij} \mid & \geq & 2
	\eean	
	The support of each of the parity checks associated with the parity symbols represented by the nodes in $T_{ij}$ will contain either $I_i$ or $I_j$, but not both. Since $\mid T_{ij} \mid \geq 2$, there are atleast two parity checks whose support contains only one of the erased symbols $I_i$ and $I_j$. The third erasure in this case (which is a parity symbol) can affect at most one of these parity checks. Therefore at least one of the information symbols can be recovered, and subsequently the remaining symbols can also be recovered.\\	
	\emph{Case 4 : $x=3$}\\
	Assume that the information symbols $I_i,I_j$ and $I_k$ are erased, and $i,j,k$ are the corresponding hyperedges.\\	
	Let $S_i$ and $T_{ij}$ be as defined above. 
	Consider $I_i$ and $I_j$. If $\mid S_i \cap S_j \mid < 2$ then $|T_{ij}|\geq 4$. i.e. there are at least $4$ parity checks whose support contains only one of the erased symbols $I_i$ and $I_j$, and hence can be used to recover the corresponding symbol. The third erasure in this case can affect at most three of these parity checks, since the hyperedge corresponding to the information symbol $I_k$ is connected to exactly $3$ nodes. Therefore at least one of the information symbol can be recovered, and subsequently the remaining symbols can also be recovered. 
	
	Consider the case when  $\mid S_i \cap S_j \mid = 2$. Let $S_i \cap S_j = \{n_1,n_2\}$. Clearly, $n_1$ and $n_2$ belong to two different partitions, say $A_1$ and $A_2$. Hyperedges $i$ and $j$ will be connected to two distinct nodes in the third partition $A_3$. Exactly one information symbol belonging to the support of the parity checks associated to the parity symbols represented by these two nodes in $A_3$ has been erased. The third erasure in this case can affect at most one parity check among those parity checks associated with the nodes in $A_3$(Since each hyperedge is connected to exactly one node, in one partition). Hence at least one of the information symbols $I_i$ and $I_j$ can be recovered using the parity check associated to the corresponding node in $A_3$. Therefore sequential recovery is possible in this case.
\end{proof}

 Fig.~\ref{fig:turan3} shows an example construction for the case when $\beta=2$ and hence with parameters $[n=14, k=8, d_{\min}=4]$ with locality parameter $r=4$ and permits sequential recovery from $t=3$ erasures. The parity-check matrix of this code was presented earlier in Section~\ref{sec:t_eq_3_bound} as an example of a code that achieves bound \eqref{eq:min_len} for $t=3$ and appears in \eqref{eq:n_eq_14_code}.
\begin{figure}
	\begin{center}
		\begin{minipage}{2in}
			\begin{center}
				\includegraphics[trim= 0.5in  0.5in 1in 1.5in, width=2in]{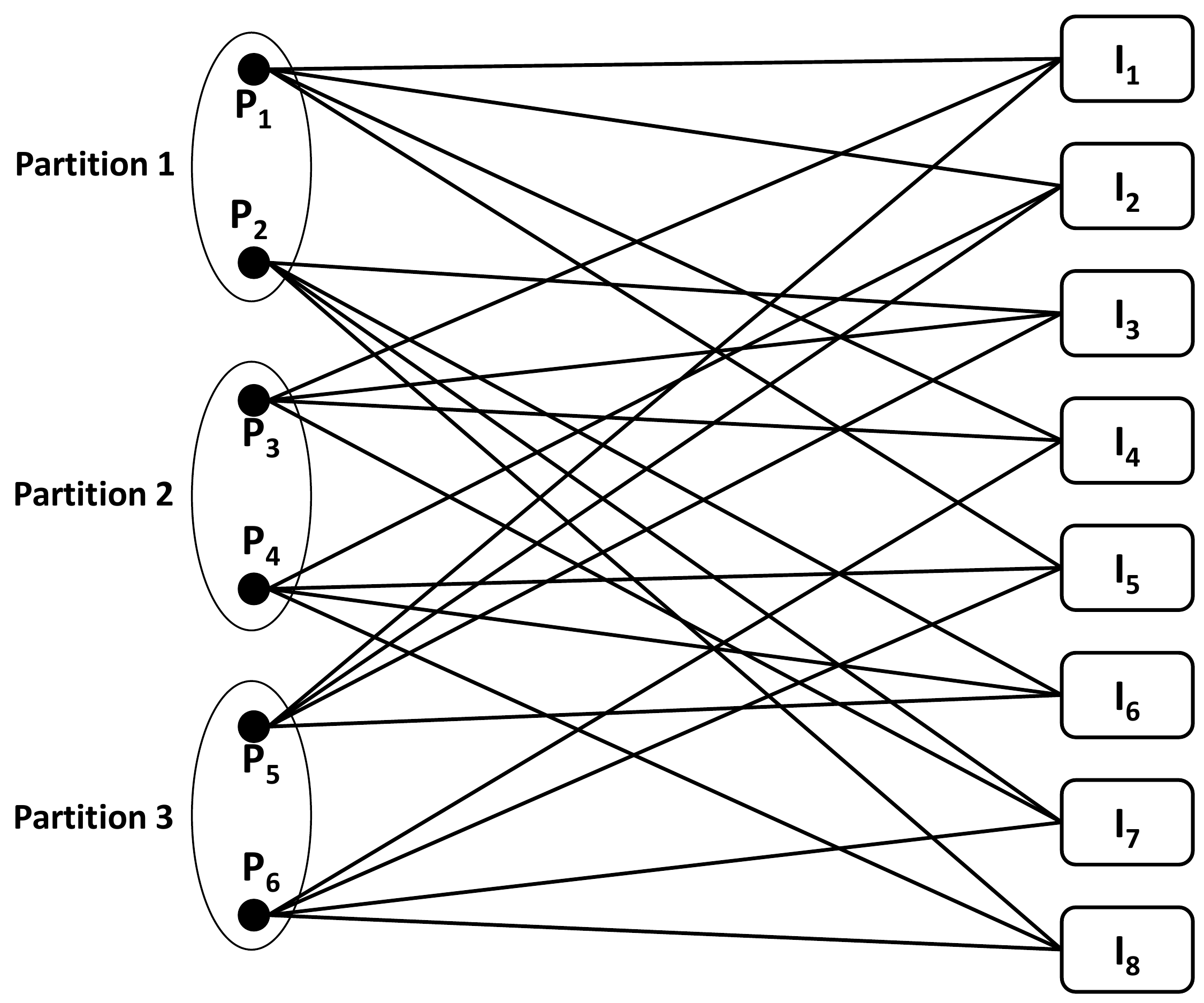}%trim l b r t
			\end{center}

		\end{minipage} \hspace{2in} 
		\begin{minipage}{2in}
			\begin{center}
				\includegraphics[trim= 1.4in  1.2in 1in 1.5in, width=2in]{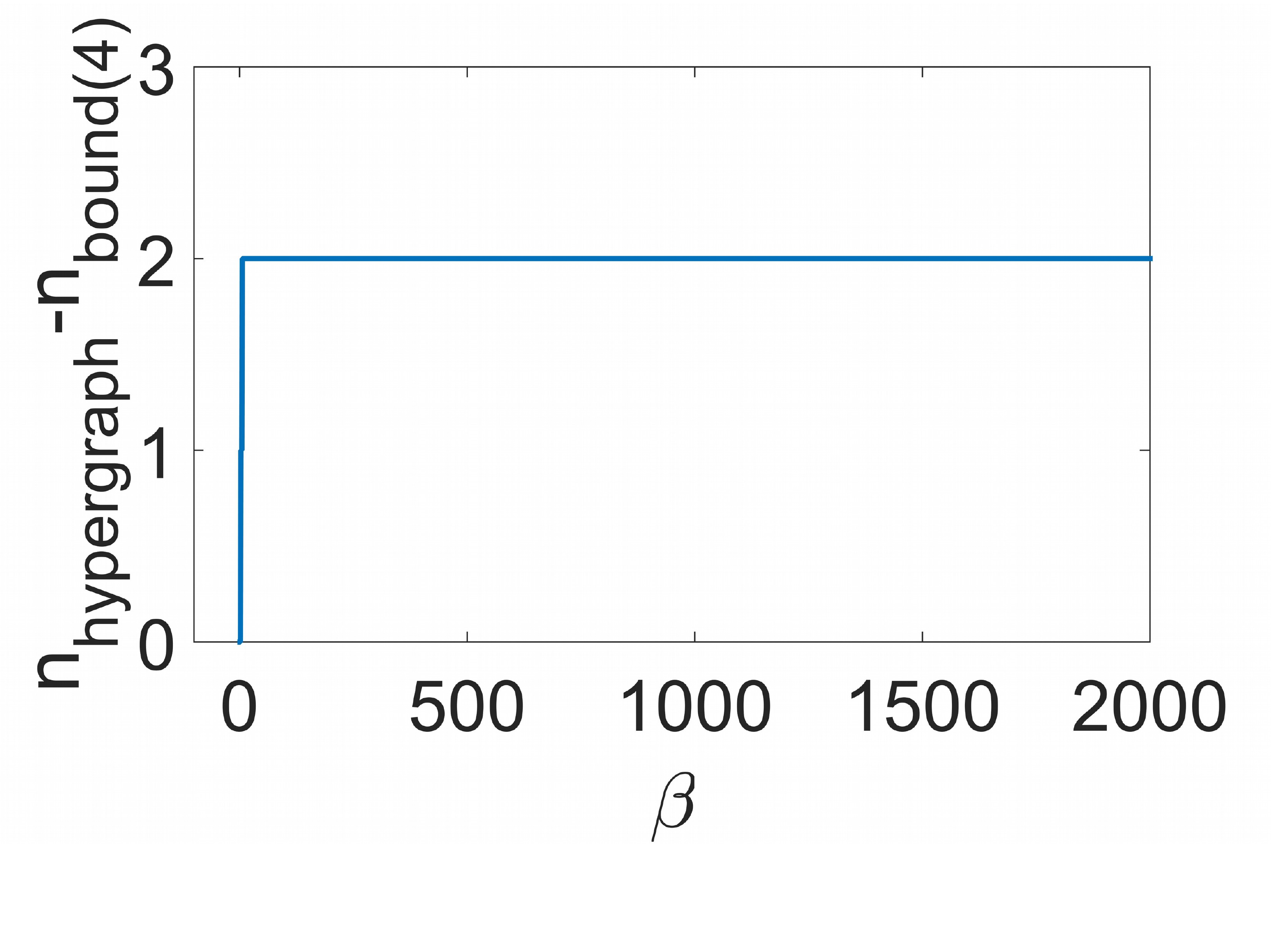}%trim l b r t
			\end{center}
		\end{minipage} 
	\end{center}
	\caption{On the left, an example of the hyperedge-based construction given in Construction~\ref{const:turan}.  In the bipartite graph shown here, each node on the right represents a hyperedge and hence, a distinct message symbol.   Each node $P_i$ on the left, represents a parity check symbol.  Thus this code has block length $n=14$ and $k=8$. \\ The plot on the right shows that there is at most a difference of $2$ between the block length of Construction~\ref{const:turan} and the lower bound on block length given by \eqref{eq:min_len} for $1\leq \beta \leq 2000$.}
	\label{fig:turan3}
\end{figure}
\end{const}

\begin{note}
The rate-optimal construction given in \cite{SonYue_3_Erasure} for $3$ erasures requires $\lceil \frac{k}{r} \rceil \geq r$. The hypergraph construction in Construction~\ref{const:turan} described above on the other hand, has a much smaller value of this ratio, namely $\frac{k}{r} = \sqrt{r}$.  Furthermore, the difference between the block length of  Construction~\ref{const:turan} and the bound \eqref{eq:min_len} is $\leq 2$ for $1 \leq \beta \leq 2000$. Thus, Construction~\ref{const:turan} yields high-rate binary codes for parameter sets outside the range of the constructions appearing in \cite{SonYue_3_Erasure}. 
\end{note}

 \section{High Rate Constructions with $r=2$ for Sequential Recovery} \label{sec:r_eq_2}
	
In \cite{TamBarFro}, Tamo et al. derived the upper bound \ref{tamo_rate} 
	\bea
	\frac{k}{n}\leq\frac{1}{\prod_{j=1}^{t}(1+\frac{1}{jr})} \label{tamo_rate}
	\eea
on the rate of codes with availability. However, to the author's knowledge, prior to this paper, there were no general constructions, either sequential or parallel, that achieved this bound for $r=2,t=4,5,6$. 

In this context, the constructions for $(r,t)_{seq}$ codes presented below for $r=2$ and $t=5,6,7$ achieve a larger rate than what appears on the right side of \eqref{tamo_rate}. For $t=4$, we provide a construction having rate close to the bound \eqref{tamo_rate}. The rate of the bipartite graph based constructions by Rawat et.al.(Section VI-A of \cite{RawMazVis}) is $\frac{r-1}{r+1}=0.33$, for $r=2$. Our codes achieve improved rates compared to  $\frac{r-1}{r+1}$ for $t=4,5,6$ and an equal rate for $t=7$. However, for $t=7$, the construction presented here has a smaller block length. The constructions will be presented in graphical form.  In all of the graphs, each node represents a code symbol and a parent node stores the parity of its children. Throughout this section we will use the terms 'nodes' and 'code symbols' synonymously and refer to the code symbols using the same labels as the nodes representing them.

	\subsection{Construction for $t=4,r=2$} \label{sec:t_eq4_r_eq2}
	The construction below yields a systematic $(r=2,t=4)_{seq}$ code with dimension $k$. 
	Let $k=4l,l>1$. Arrange $k$ nodes $I_1...I_k$, representing information symbols, as shown in Fig \ref{fig:four_erasure}. Construct $k$ parity nodes $P_1...P_k$ as shown in Fig \ref{fig:four_erasure} where $P_i$ is the parity of $\{I_i, \ I_{i+1}\}$, for $i=1...k-1$ and $P_k$ is the parity of $\{I_1, I_k\}$. Add a second layer of $k/2$ parities $Q_1...Q_{(k/2)}$.  $Q_i$ is the parity of $\{ P_i, P_{i+\frac{k}{2}}\}, \ i=1\dots \frac{k}{2}$.  This code has rate $\frac{k}{k+k+(k/2)} =0.4 $.
	
	\begin{figure}[ht!]
		\centering
		\includegraphics[height=2in,trim=0 0.2in 0 0.2in]{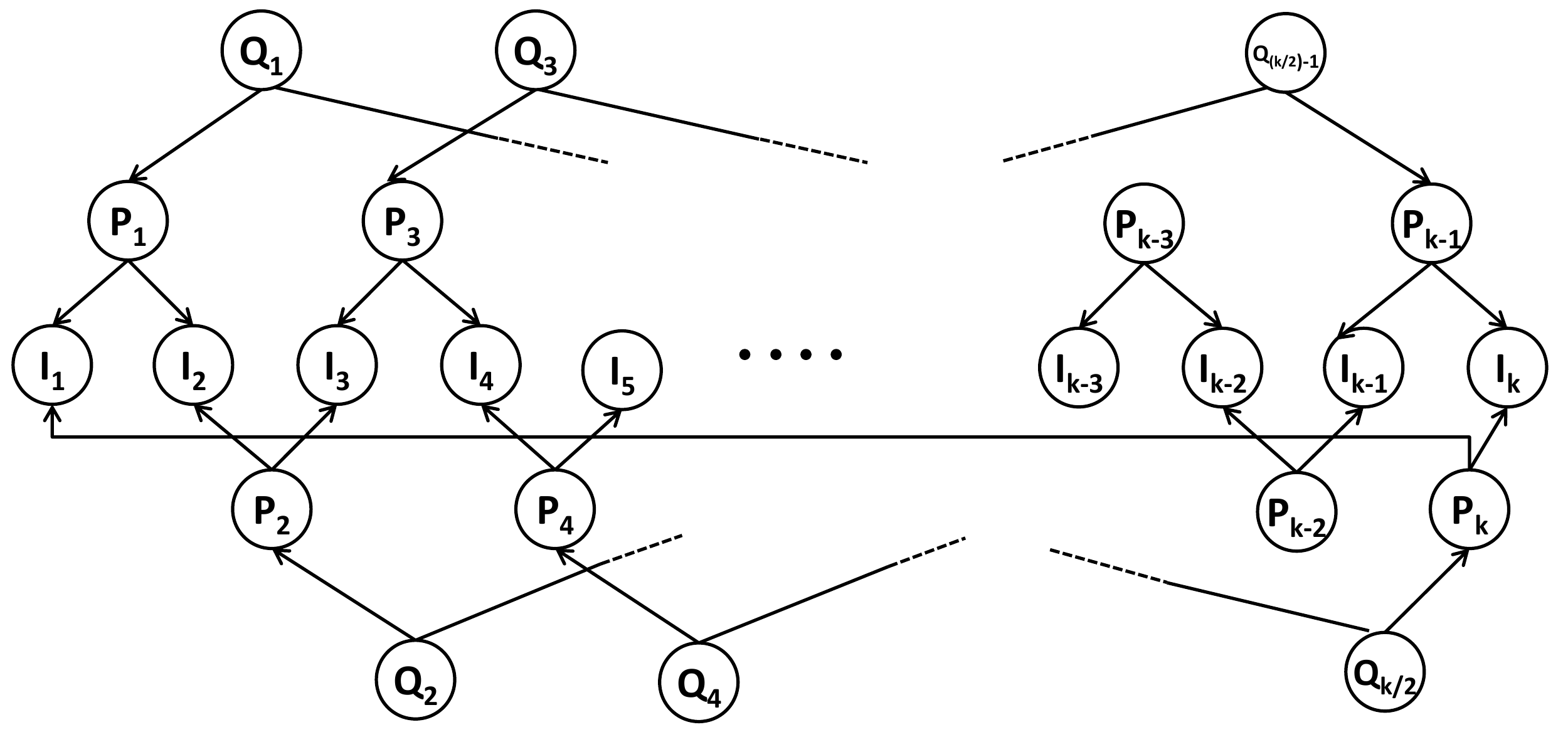}
		\caption{The Four Erasure Correcting Code}
		\label{fig:four_erasure}
	\end{figure}
	\begin{claim}
		The construction defined in section \ref{sec:t_eq4_r_eq2} generates a $(2,4)_{seq}$ code.
	\end{claim}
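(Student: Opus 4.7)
The plan is to invoke the standard peeling characterization of sequential recovery: it suffices to prove that for every erasure set $E$ with $|E|\le 4$ there is a parity check of weight at most $r+1=3$ whose support intersects $E$ in exactly one element. Peeling this symbol reduces $|E|$ to $\le 3$, the same property is re-applied, and at most four rounds complete the recovery. The only weight-$3$ parity checks available in the code are the $k$ cycle parities $P_i+I_i+I_{i+1}=0$ (indices of $I$ read modulo $k$) and the $k/2$ top parities $Q_j+P_j+P_{j+k/2}=0$, so every symbol has a short, explicit list of potential peelers: each $I$ lies in two cycle parities, each $P$ in one cycle and one top parity, and each $Q$ in one top parity.

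I would prove the peelability claim by contradiction. Suppose $E$ is a ``bad'' set with $|E|\le 4$ in which every erased symbol has every containing weight-$3$ parity check blocked by a second element of $E$. Splitting $E$ by layer as $(a,b,c)$ counting erasures in the $I$-, $P$- and $Q$-layers (with $a+b+c\le 4$), the pure-layer regimes collapse immediately: a lone erased $Q$ is peelable since both of its $P$-partners are clean; with $b=0$ each erased $I$ would need both of its cycle neighbours erased, which on the single length-$k$ neighbour cycle forces $a=k\ge 8>4$, a contradiction; and with $a=0$ each erased $P$ would need an $I$-blocker, forcing $b=0$ and then $c=0$.

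The substantive case is $a,b\ge 1$. I would pick any $I_j\in E$ and case-split on which elements of $\{P_{j-1},I_{j-1}\}$ and $\{P_j,I_{j+1}\}$ block its two cycle parities. In the $(P,P)$ subcase, $\{I_j,P_{j-1},P_j\}\subseteq E$ already consumes three erasures, and the ``other'' elements of the two top-parity supports of $P_{j-1}$ and $P_j$ are $\{Q_{(j-1)\bmod k/2},P_{j-1+k/2}\}$ and $\{Q_{j\bmod k/2},P_{j+k/2}\}$; these two sets are disjoint because $k/2\ge 4$, so the single remaining erasure cannot block both top parities, leaving one of $P_{j-1},P_j$ peelable through its top parity. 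The $(I,I)$ subcase $\{I_{j-1},I_j,I_{j+1}\}\subseteq E$ closes analogously: the outer blockers still needed for $I_{j-1}$ and $I_{j+1}$ lie in the disjoint sets $\{P_{j-2},I_{j-2}\}$ and $\{P_{j+1},I_{j+2}\}$, so one of $I_{j\pm 1}$ escapes blocking. The two mixed $(P,I)$ subcases are handled by tracking the fourth erasure to $\{P_{j+1},I_{j+2}\}$ (or its left-right mirror) and applying the same disjointness idea, either unblocking a top parity of the flanking $P$ or a cycle parity of the outer $I$.

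The main obstacle is the bookkeeping across these subcases; the unifying observation that makes each one close is that every $P$-symbol sits in two parity checks whose non-$P$ supports are disjoint (one cycle, one top), which is precisely why the second layer $\{Q_j\}$ was introduced in Construction~\ref{sec:t_eq4_r_eq2}. Combined with the erasure budget of $4$ and the fact that the top-parity pairing has stride $k/2\ge 4$, this disjointness forces peelability on every admissible $E$, and iterating the peeling argument then yields the claimed $(r=2,t=4)_{\text{seq}}$ property.
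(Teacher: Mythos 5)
Your proof is correct, and it is organized quite differently from the paper's. The paper argues directly by splitting on $x$, the number of erased \emph{information} symbols ($x=0,1,2,3,4$), and then walks through what it explicitly calls ``representative worst case scenarios,'' leaving the remaining configurations to the reader. You instead set up the contrapositive of a single peelability statement (every admissible erasure set meets some weight-$3$ check in exactly one coordinate), stratify by the layer profile $(a,b,c)$, and close every branch with one reusable observation: once an erased symbol forces two named blockers into $E$, the residual blockers needed to keep the configuration ``bad'' live in two disjoint two-element sets, and the erasure budget of $4$ leaves only one symbol to cover both. This disjointness-plus-budget lemma is what the paper's scattered subcases (e.g.\ its Case 2.3 and Case 4) are each secretly using, so your version is the more systematic and, arguably, the more complete one --- it genuinely covers all cases rather than representative ones, and it isolates why the stride-$k/2$ pairing of the $Q$-layer and the hypothesis $k=4l$, $l>1$ are needed. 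The price is that your argument is tied to the specific list of weight-$3$ checks (which is all that is required, since a sufficient set of local parities is enough for sequential recovery), whereas the paper's narrative style generalizes more readily to its $t=5,6,7$ extensions, for which it reuses the same template. Two small points worth tightening if you write this up: the assertion that these are the \emph{only} weight-$3$ dual codewords is neither needed nor verified, so phrase it as ``the weight-$3$ checks we use''; and in the mixed $(P,I)$ subcase you should note explicitly (as your disjointness check implicitly requires) that none of the three already-erased symbols can serve as the missing blocker, which holds because $k\ge 8$.
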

	\begin{proof}
		Assume that $x$ number of information symbols are erased and $4-x$ parity symbols are erased. We consider the cases $x=0,x=1 \dots x=4$ and prove each case.\\
		Here, we consider only representative worst case scenarios. Remaining cases can be analyzed similarly.\\
		\emph{case 1 : $x=0$}\\
		All parities are derived from information symbols. Hence each erased parity symbol can be recovered locally using existing parity/information symbols.\\
		\emph{Case 2 : $x=1$ }\\
		WLOG assume that node $I_1$ was erased. If either $P_1$ or $P_k$ is not erased, then $I_1$ can be recovered. Therefore assume that both $P_1$ and $P_k$ have been erased. $P_1$ can be recovered using $Q_1$ \& $P_{1+\frac{k}{2}}$. Similarly, $P_k$ can be recovered using $Q_\frac{k}{2}$ and $P_\frac{k}{2}$. But we can erase only one more parity symbol. Hence, either $P_1$ or $P_k$ can be recovered and subsequently all the remaining $3$ nodes can be recovered.\\
		\emph{Case 2 : $x =2$}\\
		WLOG assume that node $I_1$ was erased. It has $2$ repair sets $\{P_1,I_2\}$ and $\{P_k,I_k\}$ which can be used to recover $I_1$.\\
		\emph{Case 2.1 :} Assume that both $P_1$ and $P_k$ are erased. Both of them can be recovered using the second layer of parity symbols ($Q_i$'s). $I_1$ can be recovered since at most  one of the symbols $I_k$ and $I_2$ is allowed to fail.\\
		\emph{Case 2.2 :} Assume that $P_1$ and $Q_1$ are erased. $I_1$ can be recovered using the set $\{P_k,I_k\}$. Hence assume that $I_k$ is erased. But $I_k$ can be recovered using $P_{k-1}$ and $I_{k-1}$. Subsequently all the remaining erased nodes can be recovered.\\
		\emph{Case 2.3 :} Assume that $P_1$ are erased. $I_1$ can be recovered using the set $\{P_k,I_k\}$. Since erasure of $P_k$ is already handled in Case 2.1, assume that $I_k$ is erased. But $I_k$ can be recovered using $P_{k-1}$ and $I_{k-1}$. Hence assume that $P_{k-1}$ is erased. Now $P_1$,$P_{k-1}$ can be recovered from higher layer parities $Q_i's$ since $k>4$. Subsequently all the remaining erased nodes can be recovered.\\
		\emph{Case 3 : $x=3$}\\
		\emph{Case 3.1 : }Assume that $P_1$ was erased.\\
		But $P_1$ can be recovered using $Q_1$ and $P_{1+\frac{k}{2}}$. The three information symbols can be recovered due to similar reasoning as in case 4 below.\\
		\emph{Case 3.2 : }Assume that $Q_1$ was erased.\\
		But $Q_1$ can be recovered using $P_1$ and $P_{1+\frac{k}{2}}$. The three information symbols can be recovered due to similar reasoning as in case 4 below.\\
		\emph{Case 4 : $x=4$ }\\
		Assume that $I_1$ is erased. There are $2$ sets $\{P_1,I_2\}$ and $\{P_k,I_k\}$ which can be used to recover $I_1$. Assume that both nodes $I_2$ and $I_k$ have been erased so that both these sets cannot be used to recover $I_1$. For recovering $I_2$ the set of code symbols $\{P_2,I_3\}$  can be used and for recovering $I_k$ the set $\{P_{k-1},I_{k-1}\}$ can be used. Since we are allowed to erase at most one more information symbol, either $I_2$ or $I_k$ can be recovered. Rest of the 3 information symbols can be recovered due to similar reasoning.
		
	\end{proof}

	The following constructions use a similar approach to generate codes for $t=5,6 \text{ and } 7,r=2$. The proof of sequential recovery for $t=5,6,7,r=2$ is similar to the $t=4,r=2$ case given above.

	\subsection{Construction for $t=5,r=2$}
	A $(2,5)_{seq}$ code can be constructed from the $(2,4)_{seq}$ code constructed above, by adding additional parity symbols $R_1 \dots R_{(k/8)}$. $R_i$ is the parity of $\{ Q_{2i-1}, Q_{2i-1+\frac{k}{4}}\}$. Here we need the additional requirement that $8|k$.   
	
	\begin{figure}[ht!]
		\centering
		\includegraphics[height=2in]{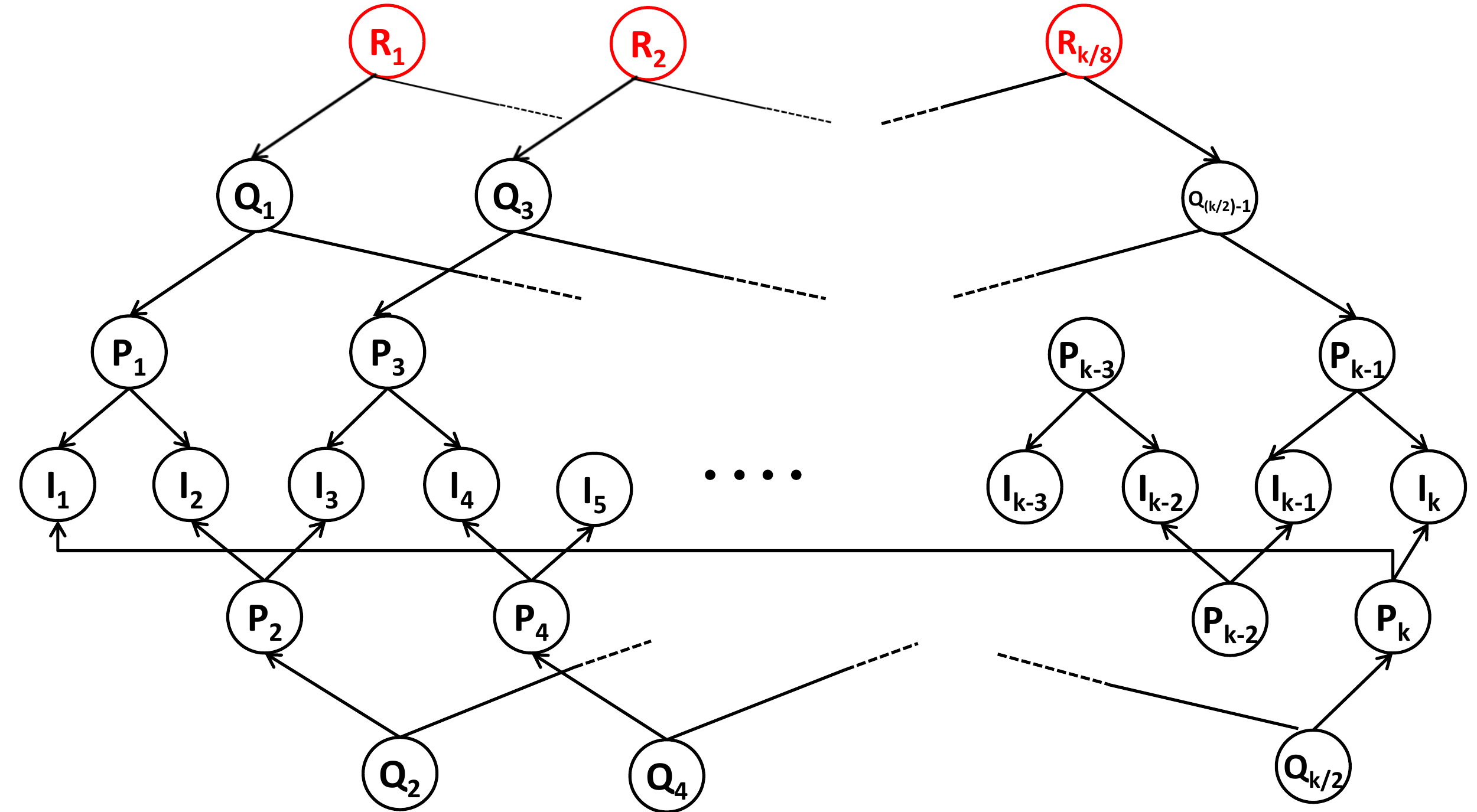}
		\caption{The Five Erasure Code}
		\label{fig:five_erasure}
	\end{figure}
	
	\begin{align*}
		\text{Rate of the code } &= \frac{k}{k+k+(k/2)+(k/8)}\\
		&=0.3810
	\end{align*}
	
	\subsection{Construction for $t=6,r=2$}
	A $(2,6)_{seq}$ code can be constructed from the $(2,5)_{seq}$ code constructed as described above, by adding additional parities $S_1 \dots S_{(k/8)}$ and $T_1...T_{k/8}$. $S_i$ is the parity of $Q_{2i}$ and $Q_{2i+(k/4)}$.  $T_i$ is the parity of $P_{(4i-2)}$ and $P_{4i}$. Here we need the additional requirement of $8|k$ and $k\geq16$.
	\begin{figure}[ht!]
		\centering
		\includegraphics[height=3.5in]{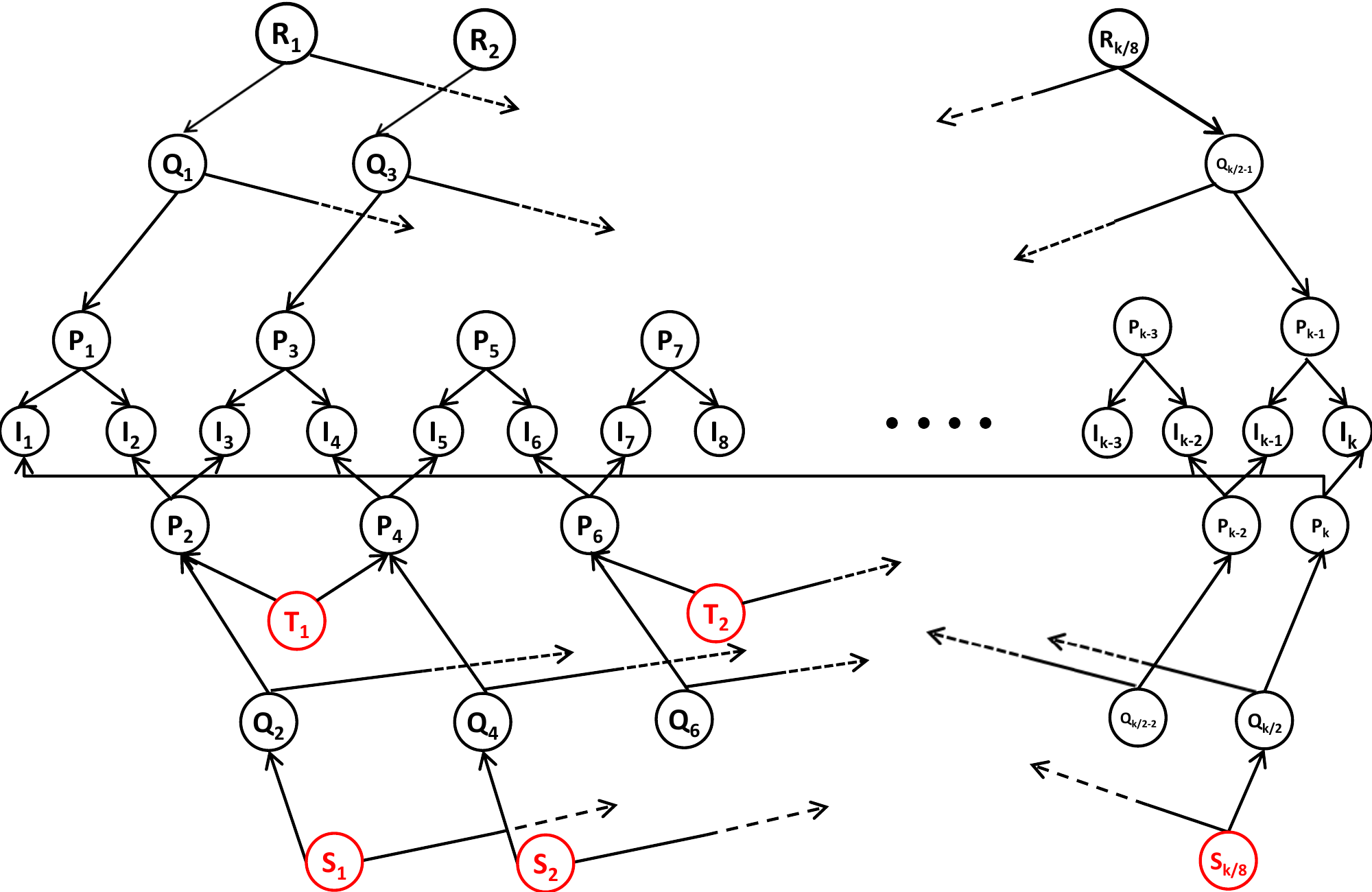}
		\caption{The Six Erasure Code : Note that the new parities $T$ covers only half of $P$ parities in the bottom (i.e. $P_2,P_4 \dots P_\frac{k}{2}$). Hence they are $k/8$ in number.}
		\label{fig:six_erasure}
	\end{figure}
	\begin{align*}
		\text{Rate of the code } &= \frac{k}{k+k+(k/2)+(k/8)+(k/8)+(k/8)}\\
		&=0.3478
	\end{align*}
	\subsection{Construction for $t=7,r=2$} \label{sec:t7r2}
A $(2,7)_{seq}$ code can be constructed from the $(2,6)_{seq}$ code constructed as described above, by adding additional parities $U_1 \dots U_{(k/16)}$ and $V_1 \dots V_{(k/16)}$. $U_i$ is the parity of $T_i$ and $T_{i+\frac{k}{16}}$. $V_i$ is the parity of $S_i$ and $S_{i+\frac{k}{16}}$ Here we need the additional requirement of $16|k$.
	\begin{figure}[ht!]
		\centering
		\includegraphics[height=2.5in]{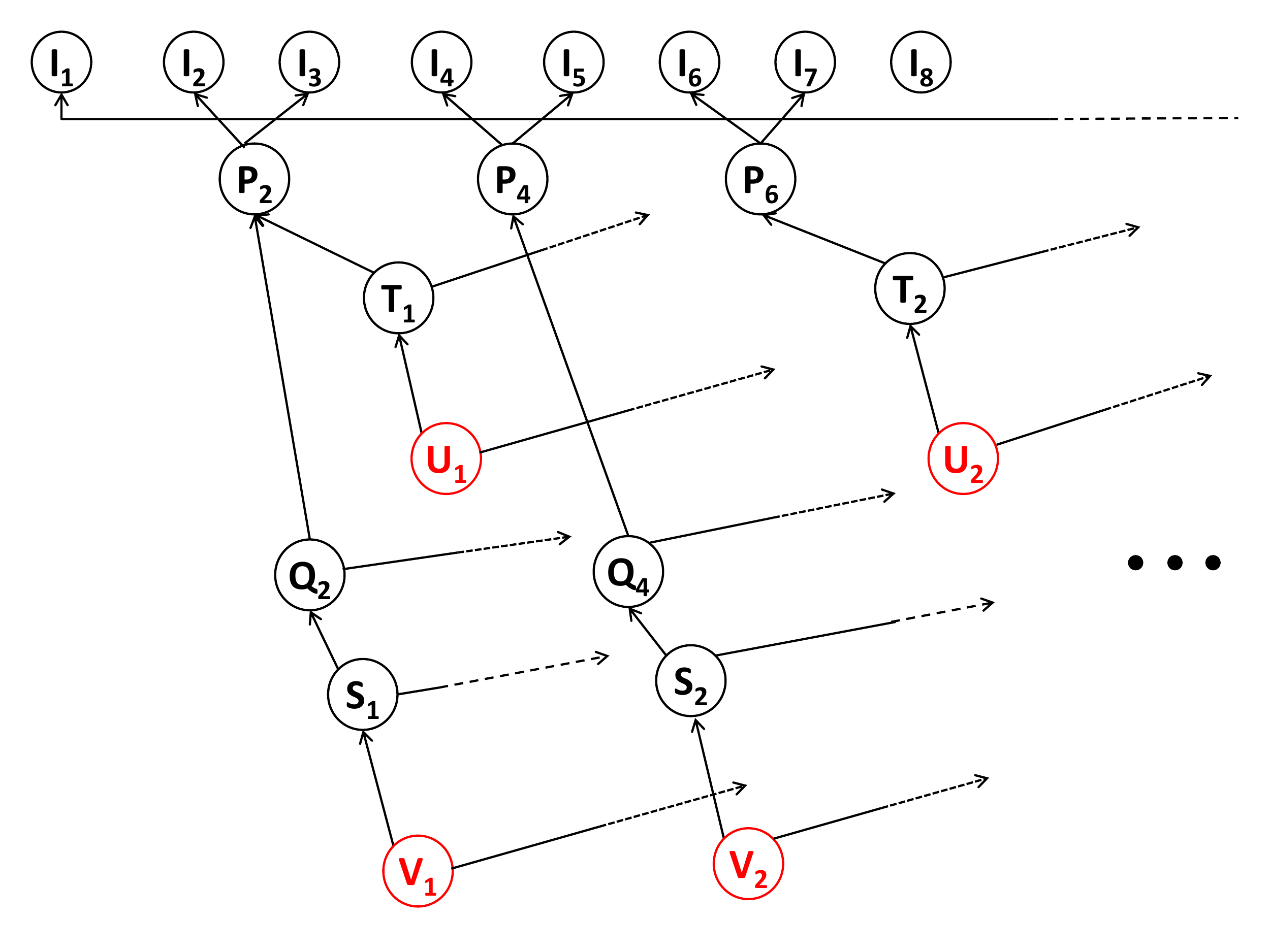}
		\caption{The seven Erasure Code (partial diagram)}
		\label{fig:seven_erasure}
	\end{figure}
	\\
	
	\text{Rate of the code is : } \\ $\frac{k}{k+k+(k/2)+(k/8)+(k/8)+(k/8)+(k/16)+(k/16)}
	=0.3333$.\\
	
   The above rate is same as the rate $\frac{r-1}{r+1}=0.3333$ achieved by construction in \cite{RawMazVis} based on bipartite graphs (Section VI-A of \cite{RawMazVis}) but our construction requires a minimum block length of 48 (for $k=16$) whereas the construction mentioned in  \cite{RawMazVis} requires block length of $(r+1)^{t-4-\lfloor\frac{t-2}{4} \rfloor+2}=3^4=81$ \cite{LazFel}.  
%	\subsection{Construction for $t=6,r=2$}
%	A $(2,6)_{seq}$ code can be constructed from a $(2,5)_{seq}$ code constructed as described above, by adding additional parities $S_1 \dots S_{(k/8)}$ and $T_1...T_{k/8}$. $S_i$ stores the parity of $Q_{2i}$ and $Q_{2i+(k/4)}$.  $T_i$ stores the parity of $P_{(4i-2)}$ and $P_{4i}$. Here we need the additional requirement of $8|k$ and $k\geq16$. $ \text{Rate of the code } ={\frac{k}{2k+(k/2)+(3k/8)}}=0.3478$
%	\subsection{Construction for $t=7,r=2$}
%	A $(2,7)_{seq}$ code can be constructed from a $(2,6)_{seq}$ code constructed as described above, by adding additional parities $U_1 \dots U_{(k/16)}$ and $V_1 \dots V_{(k/16)}$. $U_i$ stores the parity of $T_i$ and $T_{i+\frac{k}{16}}$. $V_i$ stores the parity of $S_i$ and $S_{i+\frac{k}{16}}$ Here we need the additional requirement of $16|k$. $\text{Rate =} \frac{k}{2k+(k/2)+(3k/8)+(2k/16)}=0.3333$

%	\subsection{Comparison with Bound for Parallel Recovery Codes}

\begin{table}[] 
	\begin{centering}
	\begin{tabular}{||c|c|c|c|c|| }
		\hline
		$t$ & 4 & 5 & 6 & 7\\
		\hline
		\hline
		Bound \eqref{tamo_rate} & 0.4063 & 0.3694 & 0.3410 & 0.3183 \\
		\hline
		Rate of codes in Section \ref{sec:r_eq_2} & 0.4000 & 0.3810 & 0.3478 & 0.3333  \\
		\hline
	\end{tabular}
	\caption{Comparison of code rate of constructions for $r=2$ versus bound in \cite{TamBarFro}.}
	\label{table:high_rate}
	\end{centering}
\end{table}

	\section{Constructions for General $t$}\label{sec:general_t}
    
	\subsection{Construction using Orthogonal Latin Squares}
	Let $\{L_1 \cdots L_{t-2}\}$ be a set of $(t-2)$ pairwise orthogonal Latin squares of size $(r \times r)$.  Necessarily, $(t-2) \leq (r-1)$.  Let $L_{t-1},L_{t}$ be an additional two $(r \times r)$ matrices with $(i,j)^{th}$ entries given by $L_{t-1,(i,j)}=i$, \ $L_{t,(i,j)}=j$.   While $L_{t-1},L_t$ are not Latin squares, any two squares in the enlarged set ${\cal L} = \{L_1, \cdots, L_{t+1}\}$ continue to be pairwise orthogonal.   
	%Next, replace the entries of each of the matrices $L_i$ by adding $(i-1)r$ to all the entries of $L_i$ for $1 \leq i \leq t $. Even after this replacement, the matrices in the set ${\cal L}$ remain pairwise orthogonal. 
	%	
	%	 Let the $i^{th}$ square be filled with integers $(i-1)r+1,\cdots ,ir$. Let $B_1$ and $B_2$ be two squares with elements $b^{(1)}_{ij}$ and $b^{(2)}_{ij}$ respectively as given by: $b^{(1)}_{ij}=(t-2)r+i, \hspace{10pt} b^{(2)}_{ij}=(t-1)r+j \hspace{10pt} \forall \hspace{10pt} 1\leq i,j \leq r$
	%	
	%	Let $S=L\cup B_1 \cup B_2$. The squares in $S$ are mutually orthogonal i.e., for $P, Q \in S$, the ordered pair $(p_{ij},q_{ij})$ are distinct for all $1 \leq i,j \leq r$, where $p_{ij}$ and $q_{ij}$ denote the $(i,j)^{th}$ element of $P$ and $Q$ respectively.
	Next, 	let $A$ be the $(rt \times r^2)$ matrix constructed from ${\cal L}$ as follows.  The columns of $A$ are indexed by a pair $(a,b)$ of coordinates, $1 \leq a,b \leq r$.  Then
	\bean
	A_{i,(a,b)} & = & \left\{ \begin{array}{rl}    1 & L_{\lceil \frac{i}{r} \rceil,(a,b)} = i \hspace*{-0.1in} \pmod{r}+1 \\ 0 & \text{else} . \end{array} \right. 
	\eean
	%The element in the $i$th row of Columns of A are indexed by two variables $i,j, 1\leq i \leq r, 1\leq j \leq r$. $(ij)^{th}$ element in $k^{th}$ row is 1 if $(i,j)^{th}$ element of some matrix in $S$, is $k$. 
	Let $H$ be the parity-check matrix given by
	\bea
	H & = & \scalemath{0.8}{ \left[ 
		\begin{array}{c | c c c c | c}
			& I_r  & 0 & \dots & 0 & \\
			A \hspace{10pt}	& 0 &  \ddots& \vdots & 0 & 0\\
			& 0 &  \dots & 0 & I_r & \\
			\hline
			0 & \underline{1} & 0 & \dots  & 0 & 1 \\
		\end{array}
		\right] }.\label{eq;orth_lat} 
	\eea 
	where $\underline{1}$ denotes the $(1 \times r)$ vector of all ones, $I_r$ denotes the identity matrix of order $r$, repeated $t$ times along the diagonal as shown.
    \begin{claim}
    	The code with parity check matrix as defined in \eqref{eq;orth_lat} is a $(r,t+1)_{seq}$ code, when $t$ is even. 
    \end{claim}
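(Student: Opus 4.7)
The plan is to prove by induction on the number of erasures $s$ that any $s \leq t+1$ erasures can be repaired, by establishing the stronger claim that at least one erased symbol is recoverable using a weight-$(r+1)$ dual-code parity check; sequential recovery then follows automatically. I would first enumerate the available weight-$(r+1)$ parity checks: the $tr$ local ones $P_v^{(\ell)} + \sum_{L_\ell(a,b)=v} I_{a,b} = 0$ (direct from the block rows of $H$), and $t$ global ones $Q + \sum_v P_v^{(\ell)} = 0$ for $\ell \in [t]$ (the $\ell=1$ case is the bottom row of $H$; for $\ell \geq 2$ it is derived from $\sum_v P_v^{(\ell)} = \sum_{a,b} I_{a,b} = Q$ over $\mathbb{F}_2$).

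Partitioning the $s$ erasures as $e_I$ information symbols, $e_P$ parities of the form $P_v^{(\ell)}$, and $e_Q \in \{0,1\}$, the central structural input is pairwise orthogonality of $\{L_1,\ldots,L_t\}$ (orthogonality of $L_{t-1}, L_t$ with each other and with every Latin square is straightforward from their definition). This implies that any two distinct information symbols share a common local parity check for at most one $\ell$. Writing $c_{\ell,v}$ for the number of erased info in group $\{(a,b): L_\ell(a,b)=v\}$, this yields $\sum_{\ell,v} \binom{c_{\ell,v}}{2} \leq \binom{e_I}{2}$. The info symbol in group $(\ell,v)$ is recoverable via the $\ell$-th local check precisely when $c_{\ell,v}=1$ and $P_v^{(\ell)}$ is unerased.

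The main step is a pair-counting argument. Assume no info symbol is individually recoverable; then every singleton group ($c_{\ell,v}=1$) must have $P_v^{(\ell)}$ erased, forcing $\sum_\ell k_\ell \leq e_P$ where $k_\ell$ counts singleton groups in block $\ell$. Setting $s_\ell := e_I - k_\ell$ and using convexity of $\binom{x}{2}$ on values $\geq 2$ summing to $s_\ell$, one obtains $\sum_v \binom{c_{\ell,v}}{2} \geq s_\ell/2$; summing over $\ell$ yields $(t e_I - e_P)/2 \leq \binom{e_I}{2}$, which together with $e_P \leq t+1 - e_I$ rearranges to $(e_I-1)(t+1-e_I) \leq 0$. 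This is false whenever $2 \leq e_I \leq t$, producing a contradiction and forcing some info symbol to be recoverable throughout that range.

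The main obstacle, and where $t$ even becomes essential, is the pair of edge values. For $e_I = t+1$ we must have $e_P = e_Q = 0$ and all $k_\ell = 0$, so $s_\ell = t+1$ is odd; the sharper odd-$s_\ell$ convexity bound $\sum_v \binom{c_{\ell,v}}{2} \geq (s_\ell+3)/2 = (t+4)/2$ gives $\sum_{\ell,v}\binom{c_{\ell,v}}{2} \geq t(t+4)/2 > \binom{t+1}{2}$, a contradiction. For $e_I = 1$, blocking recovery forces $P_{L_\ell(I)}^{(\ell)}$ erased for every $\ell$, hence $e_P = t$ and $e_Q = 0$, with exactly one erased $P$ per block; for any $\ell$ the global check $Q + \sum_v P_v^{(\ell)}$ then contains the single erasure $P_{L_\ell(I)}^{(\ell)}$, which I can repair, after which $I$ is recoverable via the now-clean local check. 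The case $e_I = 0$ is immediate (local checks recover all erased $P$'s, then a global check recovers $Q$ if needed). Induction on $s$ completes the proof; I expect the odd-parity edge case $e_I = t+1$ to be the main technical crux, since this is precisely where $t$ even is indispensable.
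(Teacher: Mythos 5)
Your proof is correct, and the core of it genuinely differs from the paper's argument. Both proofs split on the number $e_I$ of erased information symbols and both invoke pairwise orthogonality of the enlarged set of squares, and your treatment of the edge cases $e_I\in\{0,1\}$ essentially matches the paper's (the paper uses only the explicit bottom-row check $Q+\sum_v P_v^{(1)}$, whereas you observe that $Q+\sum_v P_v^{(\ell)}$ is a weight-$(r+1)$ dual word for every $\ell$ --- a nice extra fact, though not needed). The divergence is in the main range $2\le e_I\le t$: the paper fixes two erased information symbols $x_1,x_2$, argues via pigeonhole that each of their $t$ orthogonal parities is blocked by exactly one other erasure, and derives a contradiction from the fact that an erased parity symbol lies in exactly one row of $A$ and hence would force two distinct blocked rows to coincide. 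You instead run a single global double count: orthogonality gives $\sum_{\ell,v}\binom{c_{\ell,v}}{2}\le\binom{e_I}{2}$, the singleton-group accounting gives $\sum_{\ell,v}\binom{c_{\ell,v}}{2}\ge (te_I-e_P)/2$, and the erasure budget yields $(e_I-1)(t+1-e_I)\le 0$, which is false on the whole range at once. Your inequality subsumes the paper's delicate ``wlog'' bookkeeping and quantifies the slack; the paper's argument is more hands-on but needs the exact-blocking pigeonhole step. At $e_I=t+1$ the two proofs again use evenness of $t$ in closely parallel ways --- the paper shows every row meets the erasure set in $0$ or $2$ positions and sums over one block to get an even number equal to the odd number $t+1$, while your odd-sum convexity refinement $\sum_v\binom{c_{\ell,v}}{2}\ge(t+4)/2$ reaches the same contradiction without needing the ``exactly $2$'' claim, making that step slightly more robust. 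No gaps.
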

    \begin{proof}
    	Throughout this proof for the matrix H, we identify the Support(R) for some row R of H with indices of columns in which R has non zero entries and the code symbols corresponding to the indices of columns in which R has non zero entries synonymously.
    	We can divide the rows of $A$ into $t$ sets, $S_i,1 \leq i \leq t$, $S_i = \{R_{(i-1)r+1},\cdots ,R_{ir}\}$ (where $R_i$ indicates the $i^{th}$ row of $A$) and the support sets of the rows in $S_i$ i.e., $Support(R_{(i-1)r+j}),1 \leq j \leq r$  depends on $L_i$ and form a partition of $r^2$ columns of $A$. Each row of $A$ contain exactly $r$, non zero entries. Clearly, the support of $2$ distinct rows $R_i,R_j \in S_k$ for some $k$, are disjoint. Since the elements in ${\cal L}$ are mutually orthogonal, the support of two rows $R_i,R_j, i \neq j$ such that $R_i \in S_{k_1},R_j \in S_{k_2}, k_1 \neq k_2$ will have $|Support(R_i) \cap Support(R_j)|\leq 1$. These two observations prove that each symbol corresponding to the columns of $A$ is protected by $t$ orthogonal parities in $A$ of weight $r$ each. Hence, $A$ denote the parity check matrix of a $(r-1,t)_{par}$ code. This also implies that each symbol corresponding to the columns of $A$ is protected by $t$ orthogonal parities in $H$ of weight $r+1$.
    	
    	Let the set of first $r^2$ symbols of the code (corresponding to the columns of $A$) be denoted by $B_1$ and the set of remaining $rt+1$ code symbols be denoted by $B_2$.
    	Assume that there are $s$ erased symbols $T_1=\{x_1,...,x_s\}$ in $B_1$ and $t-s+1$ erased symbols $T_2=\{x_{s+1},...,x_{t+1}\}$ in $B_2$.
    	
    	Consider the case when $s=1$. Let $T_1=\{x_1\}$. $x_1$ is covered by $t$ orthogonal parities corresponding to rows $R'_{a_i}, 1 \leq i \leq t$ where $1 \leq a_i \leq rt$ ($R'_j$ denotes the $j^{th}$ row of $H$) with supports $D_{a_i}, 1\leq i \leq t$. If $D_{a_i} \cap T_2 = \emptyset$ for some i,  then $x_1$ can be recovered. Hence, assume that wlog $D_{a_i} \cap T_2=\{x_{1+i}\}, \forall 1 \leq i \leq t$. Since $R_{a_i} \in S_1$ for some i and  $D_{a_i} \cap T_2=\{x_{1+i}\}$ for that i, $x_{1+i}$ can be recovered using the last row parity of $H$. This recovered symbol can be used to recover $x_1$, and subsequently all symbols can be recovered.
    	
    	Now consider the case when $1<s<t+1$. Let $x_1 \in T_1$ be protected by $t$ orthogonal parities corresponding to rows $R'_{a_i}, 1 \leq i \leq t$ where $1 \leq a_i \leq rt$ with supports $D_{a_i}, 1\leq i \leq t$. If $D_{a_i}  \cap (T_1-\{x_1\} \cup T_2) = \emptyset$ for some i,  then $x_1$ can be recovered. Hence, assume that wlog $D_{a_i} \cap (T_1-\{x_1\} \cup T_2)=\{x_{1+i}\},  \forall 1 \leq i \leq t$. Since $s>1$, $x_2 \in T_1$. $x_2$ is also protected by $t$ orthogonal parities corresponding to rows $R'_{b_i}, 1 \leq i \leq t$ where $1 \leq b_i \leq rt$ with supports $D_{b_i}, 1\leq i \leq t$. If $D_{b_i} \cap (T_1-\{x_2\} \cup T_2) = \emptyset$ for some i,  then $x_2$ can be recovered. Hence, assume that wlog $|D_{b_i} \cap (T_1-\{x_2\})|=1$, $|D_{b_i} \cap T_2|=0$,  $\forall 1 \leq i \leq s-1$ and $|D_{b_i} \cap T_2|=1$,$|D_{b_i} \cap (T_1-\{x_2\})|=0$,   $\forall s \leq i \leq t$.   Hence $x_{s+j_1} \in D_{b_{j_2}}$ for some $s \leq j_2 \leq t$ and $1 \leq j_1 \leq t-s+1$. Since $x_{s+j_1} \in Support(R'_k)$ for exactly one $k$ in $1 \leq k \leq rt$ and $x_{s+j_1} \in D_{a_{s+j_1-1}}$,$x_{s+j_1} \in D_{b_{j_2}}$, this implies $D_{a_{s+j_1-1}} = D_{b_{j_2}}$ as $x_{s+j_1} \in B_2$. But this is not possible as $x_1 \in D_{a_{s+j_1-1}}$ and $x_1 \notin D_{b_{j_2}}$ as $s \leq j_2 \leq t$. Hence the symbol $x_2$ can be recovered. Similarly all erased symbols can be recovered.
    	
    	Now, consider the case when $s=t+1$. Let the $t$ orthogonal parities protecting $x_i$ be corresponding to $R^i_{j}$ (rows of $H$), $\forall 1 \leq i \leq t+1, 1 \leq j \leq t$. If one of the symbol can be recovered then the rest of the symbols can be recovered by their $t$ orthogonal parities. Hence for none of the symbols to be recoverable, the only possibility is that $|Support(R^i_{j}) \cap T_1-\{x_i\}|=1$,  $\forall 1 \leq i \leq t+1, 1 \leq j \leq t$. Hence $|Support(R) \cap T_1| \in \{0,2\}$ for any row $R$ of $H$. If we consider rows $R'_1,..,R'_r$ (first $r$ rows) of $H$ with supports $D'_1,..,D'_r$, $D'_i \cap D'_j = \emptyset$ and $[r^2] \subset \cup^{r}_{i=1} D'_i$. Now since $\sum_i |D'_i \cap T_1| =t+1$ which is odd but $\sum_i |D'_i \cap T_1|$ is even as $|D'_i \cap T_1| \in \{0,2\}$, this leads to a contradiction. Hence one of the erased symbols can be recovered and subsequently the remaining erased symbols can be recovered.
    \end{proof}

	This code has block length $r^2+rt+1$ and rate $\frac{r^2}{r^2+rt+1}$.  In comparison, the rate of the construction given in \cite{WanZhaLiu_Arb_Locality} is $\frac{r}{r+t+1}$ with a block length of $\binom{r+t+1}{t+1}$.  Thus our construction achieves both a better rate and a smaller block length , made possible by adopting a sequential approach to recovery as opposed to using orthogonal parities. The construction in \cite{RawMazVis} based on bipartite graphs (Section VI-A of \cite{RawMazVis}) has a higher rate of $\frac{r-1}{r+1}$ but requires block length of size $(r+1)^{t-4-\lfloor\frac{t-2}{4} \rfloor+2}$ \cite{LazFel}.  

	\subsection{Construction using product of sequential codes}\label{sec:prod_code}
Let $\mathcal{C}_1$ be a $[n_1,k_1]$ $(r,t_1)_{seq}$ code and $\mathcal{C}_2$, be a $[n_2,k_2]$ $(r,t_2)_{seq}$ code. Let $\mathcal{C}$ be the code obtained by taking the product of $\mathcal{C}_1$ and $\mathcal{C}_2$ which will be a $[n_1 n_2,k_1 k_2]$ code with locality $r$. The sequential erasure correcting capability of $\mathcal{C}$ is given by:
\begin{claim}
The code $\mathcal{C}$ constructed as described above can recover from $(t_1+1)(t_2+1)-1$ erasures sequentially. 	
\end{claim}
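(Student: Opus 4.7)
The plan is to view codewords of $\mathcal{C}$ as $n_1\times n_2$ arrays whose columns are codewords of $\mathcal{C}_1$ and whose rows are codewords of $\mathcal{C}_2$, and then to decode any pattern of at most $(t_1+1)(t_2+1)-1$ erasures by repeatedly peeling off a row or column that is locally amenable to the sequential-recovery property of the corresponding component code.

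First I would verify that every weight-$\leq(r+1)$ dual codeword of $\mathcal{C}_1$, applied to a single column of the array, yields a dual codeword of $\mathcal{C}$ of the same weight whose support is contained in that column; dually for $\mathcal{C}_2$ and rows. These row/column-local parity checks, inherited from the component codes, are the only weight-$\leq(r+1)$ parities I will need in order to match the definition of a sequential code.

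The combinatorial heart of the argument is the following lemma, which I would prove by double counting: if $E\subseteq[n_1]\times[n_2]$ is non-empty and $|E|\leq (t_1+1)(t_2+1)-1$, then either some non-empty column of $E$ contains at most $t_1$ erasures, or some non-empty row of $E$ contains at most $t_2$ erasures. Suppose not; then every non-empty row has at least $t_2+1$ erasures and every non-empty column at least $t_1+1$. Any single non-empty column then touches at least $t_1+1$ non-empty rows, so the number $r^*$ of non-empty rows satisfies $r^*\geq t_1+1$, and counting erasures row by row gives $|E|\geq r^*(t_2+1)\geq(t_1+1)(t_2+1)$, a contradiction.

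With the lemma in hand I would build the sequential-recovery ordering greedily. While the remaining erasure set $E$ is non-empty (hence of size at most $(t_1+1)(t_2+1)-1$), the lemma locates either a row with between $1$ and $t_2$ erasures, or a column with between $1$ and $t_1$ erasures. Invoking the sequential-recovery property of $\mathcal{C}_2$ (respectively $\mathcal{C}_1$) on that row (resp.\ column) supplies an ordering of its erasures together with weight-$\leq(r+1)$ parity checks whose supports lie entirely inside the chosen row/column; these supports therefore trivially avoid every erasure outside it, and avoid the later erasures inside it by the component code's guarantee. Appending this block to the global ordering, deleting the newly recovered symbols from $E$, and iterating produces a valid sequential-recovery schedule for $\mathcal{C}$. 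The hard part will be setting up the double-counting lemma cleanly; once that is done, the rest is a clean greedy induction on $|E|$ together with the routine translation of component-code parity checks into parity checks of the product code.
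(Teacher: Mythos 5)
Your proposal is correct and follows essentially the same route as the paper: the paper's proof is exactly this counting argument (a row with more than $t_1$ erasures whose erased positions all lie in columns with more than $t_2$ erasures forces at least $(t_1+1)(t_2+1)$ erasures in total), presented tersely, while you have formalized it as a double-counting lemma plus a greedy peeling induction and added the routine check that component dual codewords lift to row/column-supported dual codewords of the product code. No gaps; your write-up is simply a more careful version of the paper's argument.
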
 
		\begin{figure}[h]
			\centering
			\includegraphics[height=1.5in]{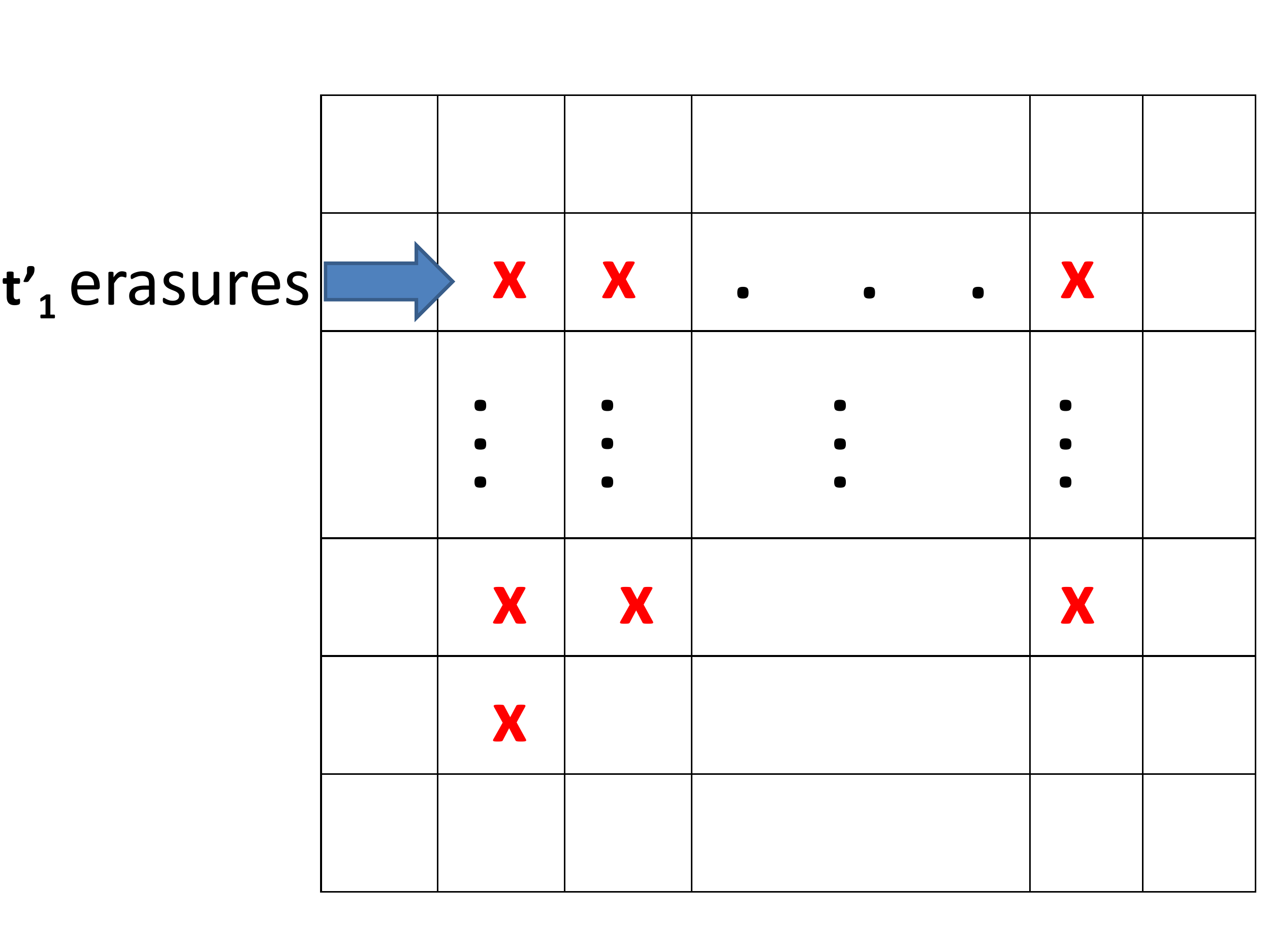}
			\caption{Product code : 'X' indicates node failures.}
			\label{fig:prod_error}
		\end{figure}

	\begin{proof}
		Consider a codeword in $\mathcal{C}$ as shown in Figure \ref{fig:prod_error} consisting of a $(r,t_1)_{seq}$ code in the rows and $(r,t_2)_{seq}$ code in the columns. Assume that there are $t_1'$ erasures in a row of the codeword. If $t_1'\leq t_1$, then the nodes can be recovered using $\mathcal{C}_1$. Hence assume that $t_1'>t_1$. $\mathcal{C}_2$ can recover one of these symbols if the number of erasures in the column corresponding to that symbol is less than or equal to $t_2$. Hence, the code $\mathcal{C}$ fails to repair the erased symbols only if the number of  erasures is greater than or equal to $(t_1+1)(t_2+1)$, which completes the proof.
	\end{proof}
	
	\subsubsection{Suboptimality of the product of $\delta$ number of [3,2] single parity check codes for $\delta \geq 9$ (\cite{SonYue_Binary_local_repair})} 
           Take the product of the code given in Section \ref{sec:t7r2} for $k=16$  with the  product of three [7,3] Simplex codes i.e.,

           Let $\cal{C}$ = Product of (Code given in Section \ref{sec:t7r2} for $k=16$, [7,3] Simplex code, [7,3] Simplex code, [7,3] Simplex code)

          The rate of the resulting code $\cal{C}$ will be : 0.026239067 with locality $r=2$ and $t=2^9 -1$ and block length $n = 16464$.

If we take the product of nine $[3,2]$ single parity check codes, the resulting code will have $n=3^9 = 19683$, rate = 0.0260122 and $r=2$, $t=2^9-1$.

Thus $\cal{C}$ achieves a better rate with a smaller block length than the code given by product of nine [3,2] single parity check codes. Beyond this for $t=2^\delta-1$ for $\delta > 9$, we can simply take the code given by product of $\cal{C}$ and the code obtained from product of $\delta-9$ number of $[3,2]$ single parity check codes, and achieve a better rate with a smaller block length than the the product of $\delta$ number of $[3,2]$ single parity check codes with the same locality OF $r=2$ and the same erasure correcting capability of $t=2^\delta-1$.

Hence product code is not optimal for sequential recovery, for $r=2$,$t=2^\delta-1$ and $\delta \geq 9$.\\

	\subsubsection{An Example Construction}
	
	Let $H_t$ denote the parity check matrix of a $(r,t)$-sequential erasure correcting code with parameters $[n',k',d']$. Construct a new matrix $H$ as shown below. 
	\bea
	H  = 
	\begin{pmatrix}
		H_t & 0 & \cdots & 0 & 0 \\
		0 & H_t & \cdots & 0 & 0 \\
		\vdots & \vdots & \vdots & \vdots & \vdots \\
		0 & 0 & \cdots & H_t & 0\\
		I_{n'} & I_{n'} & \cdots & I_{n'} & I_{n'}
	\end{pmatrix}, \label{eq:prod_mx}
	\eea 
	
	$H_t$ is repeated $r$ times along the diagonal. $I_{n'}$ denote the $n'\times n'$identity matrix.
	\begin{claim}
		A code with parity check matrix $H$ as defined by \eqref{eq:prod_mx} has all symbol locality $r$ and can correct $2t+1$ erasures using sequential approach.
	\end{claim}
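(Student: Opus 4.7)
My plan is to verify the two assertions of the claim separately, with the bulk of the work on the $(2t+1)$-erasure sequential recovery. For locality, I would argue directly from the block structure of $H$ in \eqref{eq:prod_mx}: view the $n=(r+1)n'$ code coordinates as $r+1$ length-$n'$ \emph{stripes} $c_1,\ldots,c_{r+1}$. A symbol in one of the first $r$ stripes is covered by an $H_t$-parity for that stripe, of weight $\leq r+1$ (since $\mathcal{C}_t$ has locality $r$). A symbol in the $(r+1)$-th stripe is covered by the corresponding row of the bottom $[I_{n'} \mid \cdots \mid I_{n'}]$ block, which has weight exactly $r+1$. Hence every code symbol has a local parity of weight $\leq r+1$.

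For sequential recovery of $2t+1$ erasures, the argument rests on two observations. First, a pigeonhole argument: among the $r+1$ stripes, at most one can contain $\geq t+1$ of the $2t+1$ erasures, since two such stripes would require at least $2(t+1) > 2t+1$ erasures; call this stripe (if it exists) the \emph{bad} stripe. Second, a linear-algebra observation: for every row $h$ of $H_t$, the vector $[\mathbf{0},\ldots,\mathbf{0},h]$ (that is, $H_t$ applied to stripe $r+1$ only) lies in the row span of $H$ and hence is a valid parity check of weight $\leq r+1$. Indeed, the $[I_{n'}\mid\cdots\mid I_{n'}]$ block generates every vector of the form $[\underline{v},\underline{v},\ldots,\underline{v}]$ in its row span; choosing $\underline{v}=h$ and then subtracting the rows $[h,\mathbf{0},\ldots,\mathbf{0}], [\mathbf{0},h,\mathbf{0},\ldots,\mathbf{0}], \ldots, [\mathbf{0},\ldots,\mathbf{0},h,\mathbf{0}]$ (all rows belonging to the $H_t$ blocks) leaves $[\mathbf{0},\ldots,\mathbf{0},h]$. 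Thus stripe $r+1$ also inherits an $(r,t)_{\text{seq}}$ recovery capability.

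With these observations, the recovery proceeds in two stages. In the first stage, each non-bad stripe has $\leq t$ erasures and is sequentially recovered in isolation using its $H_t$-parities (direct for the first $r$ stripes, derived for stripe $r+1$); since these parities involve only symbols of a single stripe, erasures in other stripes do not interfere with the sequential schedule inside any given stripe. In the second stage, only the bad stripe (if any) still carries erasures, and applying the bottom parity at each erased column index leaves exactly one unknown---the symbol of the bad stripe at that index---which is resolved immediately. All $2t+1$ erasures are thereby corrected. The main technical point to verify carefully is the derived-parity observation: without it, the case in which the bad stripe lies among the first $r$ stripes would be problematic, because the bottom parities would still carry two unknowns (one in the bad stripe and one in stripe $r+1$); the observation itself is a short linear-combination computation, and the remaining case analysis is driven entirely by the pigeonhole bound.
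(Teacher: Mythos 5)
Your proof is correct, but it takes a different route from the paper's. The paper disposes of this claim in a single sentence: it observes that $H$ is the parity-check matrix of the product of the $[n',k']$ code with the $[r+1,r]$ single-parity-check code, and then invokes the product-code claim proved immediately before it, which gives $(t+1)(1+1)-1=2t+1$ sequentially correctable erasures. Your stripe decomposition is exactly this product structure, so the underlying combinatorial idea (at most one stripe can hold more than $t$ of the $2t+1$ erasures) is the same, but you carry the argument out directly at the level of the parity-check matrix instead of by reduction. This buys rigor that the paper's one-liner (and indeed its proof of the product-code claim) leaves implicit: your derived-parity computation showing that $[\mathbf{0},\ldots,\mathbf{0},h]$ lies in the row span of $H$ for every $h$ in the row space of $H_t$ is genuinely needed, since without it the $(r+1)$-th stripe --- which has no $H_t$ block of its own --- could not be repaired when the heavily erased stripe sits among the first $r$ and collides with it in column position; you also verify that the repair schedule meets the formal definition of sequential recovery (checks of weight $\leq r+1$ whose supports avoid later erasures) and that all symbols, including those of the last stripe, have locality $r$, points the paper does not address explicitly. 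What the paper's route buys in exchange is brevity and generality, since the product-code claim applies to an arbitrary $(r,t_2)_{\text{seq}}$ second factor rather than only the single-parity-check case.
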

	
	\begin{proof}
		Proof follows by observing that the given parity check matrix is the parity check matrix of the code obtained by taking the product of single parity check code and the $[n',k',d']$ code.
	\end{proof}
	The resulting code will have the following parameters 
	\begin{align*}
		\text{Block Length }&= (r+1)n'\\
		\text{Dimension }&= rk'\\
		\text{Minimum Distance } &=2t+2
	\end{align*}

\end{document}